\begin{document}
\title{On-shelf Utility Mining of Sequence Data}

\author{Chunkai Zhang}

\affiliation{
	\institution{Harbin Institute of Technology (Shenzhen)}
	\city{Shenzhen}
	\country{China}
}
\email{ckzhang@hit.edu.cn}

\author{Zilin Du}
\affiliation{
	\institution{Harbin Institute of Technology (Shenzhen)}
	\city{Shenzhen}
	\country{China}
}
\email{yorickzilindu@gmail.com}

\author{Yuting Yang}
\affiliation{
	\institution{Harbin Institute of Technology (Shenzhen)}
	\city{Shenzhen}
	\country{China}
}
\email{20S151137@stu.hit.edu.cn}

\author{Wensheng Gan}
\authornote{This is the corresponding author}
\affiliation{
	\institution{Jinan University}
	\city{Guangzhou}
	\country{China}
}
\email{wsgan001@gmail.com}

\author{Philip S. Yu}
\affiliation{%
	\institution{University of Illinois at Chicago}
	\city{Chicago}
	\country{USA}
}
\email{psyu@uic.edu}

\begin{abstract}
	
	Utility mining has emerged as an important and interesting topic owing to its wide application and considerable popularity. However, conventional utility mining methods have a bias toward items that have longer on-shelf time as they have a greater chance to generate a high utility. To eliminate the bias, the problem of on-shelf utility mining (OSUM) is introduced. In this paper, we focus on the task of OSUM of sequence data, where the sequential database is divided into several partitions according to time periods and items are associated with utilities and several on-shelf time periods. To address the problem, we propose two methods, OSUM of sequence data (OSUMS) and OSUMS$^{+}$, to extract on-shelf high-utility sequential patterns. For further efficiency, we also designed several strategies to reduce the search space and avoid redundant calculation with two upper bounds time prefix extension utility (\textit{TPEU}) and time reduced sequence utility (\textit{TRSU}). In addition, two novel data structures were developed for facilitating the calculation of upper bounds and utilities. Substantial experimental results on certain real and synthetic datasets show that the two methods outperform the state-of-the-art algorithm. In conclusion, OSUMS may consume a large amount of memory and is unsuitable for cases with limited memory, while OSUMS$^{+}$ has wider real-life applications owing to its high efficiency.
	
\end{abstract}

%
%
\begin{CCSXML}
<ccs2012>
 <concept>
  <concept_id>10010520.10010553.10010562</concept_id>
  <concept_desc>Computer systems organization~Embedded systems</concept_desc>
  <concept_significance>500</concept_significance>
 </concept>
 <concept>
  <concept_id>10010520.10010575.10010755</concept_id>
  <concept_desc>Computer systems organization~Redundancy</concept_desc>
  <concept_significance>300</concept_significance>
 </concept>
 <concept>
  <concept_id>10010520.10010553.10010554</concept_id>
  <concept_desc>Computer systems organization~Robotics</concept_desc>
  <concept_significance>100</concept_significance>
 </concept>
 <concept>
  <concept_id>10003033.10003083.10003095</concept_id>
  <concept_desc>Networks~Network reliability</concept_desc>
  <concept_significance>100</concept_significance>
 </concept>
</ccs2012>
\end{CCSXML}

\ccsdesc[500]{Information Systems~Data mining}

\ccsdesc[300]{Applied computing~Business intelligence} 


%
%

\keywords{On-shelf utility mining, utility mining, sequence data, data mining.}

\maketitle

\renewcommand{\shortauthors}{C. Zhang et al.}

\section{Introduction}

It is well known that the tasks of association rule mining \cite{zhang2019privacy}, clustering \cite{huang2019ultra}, classification \cite{pavlinek2017text}, and prediction \cite{wu2018hybrid}, which primarily aim at extracting interesting information and patterns from data repositories, perform a crucial role in the domain of knowledge discovery in database (i.e., data mining) \cite{han2011data}. Among them, association rule mining concentrates predominantly on the existence of a causal relationship in databases within the framework of support and confidence, where a subproblem of frequent itemset mining (FIM) exists \cite{agrawal1994fast}. As the name implies, the goal of FIM is to discover all frequent itemsets with respect to a threshold called support, which is the lower limit of occurrences. As a significant part in the domain of pattern mining, the research issue of FIM has been extensively studied so that the patterns, i.e., frequent itemsets, can be extracted easily and efficiently. However, FIM cannot tackle sequence data, where each item is embedded in a timestamp. The inherent sequential order among items results in a serious combinatorial explosion of the search space that results in not only a significantly long execution time but also large memory consumption. To handle this issue, a series of sequential pattern mining (SPM) algorithms were proposed \cite{agrawal1995mining, ayres2002sequential}. Given a sequence database, the mining objective of the SPM algorithms is to extract the complete set of frequent subsequences as sequential patterns. Bearing a striking resemblance to FIM, SPM identifies patterns with the measure of frequency. Therefore, both FIM and SPM approaches belong to the frequency-oriented framework. More details about FIM and SPM can be obtained from \cite{fournier2017surveyi, fournier2017surveys}.

Evidently, the implicit assumption within the frequency-oriented framework is that the frequency of a pattern, as the measure, represents its significance and interest; however, the frequency of a pattern cannot completely reveal the interesting aspects of a pattern \cite{gan2020fast}. There are several measures, such as utility, diversity, conciseness, applicability, and surprisingness that should be considered while extracting and ranking patterns with respect to their potential interests to the user \cite{geng2006interestingness}. Among them, utility is a semantic measure, that is, it considers the objective factors of the data itself, as well as the preferences of the user. In general, this type of interestingness is based on user-specified utility functions that can fit the importance of data, which are generated from domain knowledge or customized expectation of users. In practice, in a retail store, decision makers prefer goods with a high return on investment to those merely sold well, which is based on achieving profit maximization. For instance, the sales volume of televisions may be relatively low when compared to that of pencils; however, even certain infrequent sales patterns that include televisions may yield a higher profit than those pertaining to pencils. Hence, in this circumstance, the measure of patterns is supposed to consider a tradeoff between sales volume and unit profit to determine the preference of users. Thus, the concept of utility was introduced into data mining, resulting in the inception of the field of utility mining \cite{gan2019utility,gan2020proum}. As \cite{gan2018surveyuo} formulated, all utility mining methods belong to the utility-oriented framework, which considers not only the quantity (i.e., internal utility) but also relative importance (i.e., external utility) of items. Accordingly, FIM and SPM have been generalized as the tasks of high-utility itemset mining (HUIM) \cite{gan2018surveyii} and high-utility SPM (HUSPM) \cite{truong2019survey}, respectively, by incorporating the concept of utility. When compared to HUIM, HUSPM considers the sequential ordering of items and has to encounter a more serious combinatorial explosion of the search space, which results in a greater challenge in discovering desired patterns. Till now, several previous studies have developed a number of strategies and efficient data structures so that the HUSPM methods can be successfully applied to numerous real-life situations such as mobile commerce environment \cite{shie2011mining}, web browsing analysis \cite{ahmed2010mining}, and gene regulation \cite{zihayat2017mining}. 

It is obvious that HUSPM algorithms can identify interesting and informative patterns from complex databases because the problem of HUSPM simultaneously considers frequency, sequential order, and utility. However, HUSPM only considers relative time, i.e., sequential order of items in a sequence, and neglects absolute time in the database \cite{lan2014discovery}. In reality, all items in the database only appear in multiple different and user-specified time periods, namely on-shelf time. In real-world situations, certain goods are only placed on the shelf during certain short time periods; for example, in most retail stores, T-shirts are sold only in the summer and are removed from the shelf in other seasons. Therefore, the pattern $<$\textit{pencil}, \textit{pencil sharpener}$>$ is on the shelf throughout the year and more likely to be of high utility, while the pattern $<$\textit{T-shirt}, \textit{shorts}$>$ may demonstrate a low utility even though the items are sold well during their short on-shelf time. Obviously, conventional utility mining methods have a bias toward items that have longer on-shelf time as they have a higher chance of generating a high utility. To eliminate the bias, the problem of on-shelf utility mining (OSUM) was proposed, where the database is divided into several partitions according to time periods and each item is associated with several on-shelf time periods. Incorporating on-shelf information into the original measure, that is utility, is crucially beneficial to discover interesting and informative time-related patterns. On the one hand, consumers expect to buy their favorite goods at any time whether they are on sale or not, and they may blacklist the store even if the goods run out of stock on the shelves only a few times. On the other hand, retailers value every row on their shelves, where each inch of space is significantly valuable; moreover, they have to ensure that the right brands and goods are placed on the shelf at the appropriate times. Therefore, these extracted time-related patterns are helpful for retailers to make decisions. 

To fulfill the task of OSUM, HUIM and HUSPM are extended to the problems of OSUM in transaction data \cite{lan2014shelf, radkar2015mining, fournier2015foshu} and sequence data \cite{lan2014discovery}, respectively, while considering the on-shelf time. In the task of OSUM in transaction data, algorithms can identify on-shelf high-utility itemsets (osHUIs) by recursively enumerating itemsets in an ascending alphabetical order. For further efficiency improvement, a series of upper bounds as well as pruning strategies were developed to reduce the massive search space. To address the sequence data, Lan et al. \cite{lan2014discovery} proposed a novel approach serving as the first solution to discover on-shelf high-utility sequential patterns (osHUSPs). This issue of OSUM of sequence data considers a vital character, that is timestamps embedded in the items; thus, a method to determine osHUSPs has to overcome certain technical challenges, which are as presented below.

First, the calculation of utility is more complex than that of frequency as a subsequence may occur in a sequence multiple times. Therefore, knowing whether a subsequence will appear in a sequence is insufficient; determining all occurrences and choosing a proper calculation method are required.

Second, according to the Apriori property, the measurement frequency in the frequency-oriented framework is anti-monotonic. However, this download closure property is not satisfied for patterns with embedded utility values, which implies that it is impracticable to reduce the search space relying on the anti-monotone properties of the frequency-oriented framework in utility mining.

Third, a common and efficient method to mine patterns is by enumerating patterns in an alphabetical order recursively, which presents a general issue, that is, critical combinatorial explosion of the search space. This is because the intrinsic sequential ordering leads to several possibilities of concatenation, which is the operation of generating longer candidates. Thus, it is necessary to design tight upper bounds and powerful pruning strategies to overcome inefficient checking of a large number of candidates. 

Fourth, the search space in the problem sharply increases when the on-shelf information is incorporated. This is because the method must enumerate sequences in different time periods. Moreover, it is not straightforward to adapt the pruning techniques in HUSPM for OSUM of sequence data, and only a few pruning strategies were developed to address this problem.

To the best our knowledge, only one work \cite{lan2014discovery} focuses on extracting osHUSPs in sequence data. The development of the research topic is not yet mature, and the only existing algorithm, referred to as a two-phase approach for mining high on-shelf utility sequential patterns (TP-HOUS) has significant room for improvement in terms of execution time, memory consumption, candidates filtering, and scalability. Moreover, no systematic problem statement has been formulated. In this paper, we formulate the problem of OSUM of sequence data and propose two efficient algorithms. The major contributions of this study can be summarized as follows:

\begin{itemize}
	\item	We formulated the problem of OSUM of sequence data as discovering the complete set of osHUSPs. In particular, important notations, concepts and principles in the problem are defined.
	
	\item	A two-phase method, namely On-Shelf Utility Mining of Sequences (OSUMS), is designed with a novel storage data structure named periodical q-matrix. Two local pruning strategies are proposed to reduce the search space with two upper bounds time prefix extension utility (\textit{TPEU}) and time reduced sequence utility (\textit{TRSU}), and a strategy was designed to avoid redundant calculations.
	
	\item 	We also designed a one-phase method, namely OSUMS$^{+}$, which overcomes the two challenges present in the two-phase algorithms. OSUMS$^{+}$ adopts a novel storage data structure with periodical utility chain and utilizes two global pruning strategies with two upper bounds \textit{TPEU} and \textit{TRSU}.
	
	\item 	A series of experiments on six real-world datasets were conducted to evaluate the performance of the two proposed methods. The experimental results show that they outperform the only existing algorithm TP-HOUS. In addition, when compared to the one-phase algorithm OSUMS$^{+}$, OSUMS consumes a significantly large amount of memory and is not sufficiently efficient.
\end{itemize}

The remainder of this article is organized as follows. Section 2 briefly reviews related work about utility mining and OSUM. Section 3 presents basic definitions and formulates the problem of OSUMS in sequence data. In Section 4, we present the details of the two proposed methods with several novel data structures, upper bounds, and strategies. An experimental evaluation of the designed methods is presented in Section 5. Finally, conclusions and future work are discussed in Section 6.

\section{Related Work}

In this section, we separately review the prior literature pertaining to utility mining and OSUM. Particularly, we also discuss the drawbacks of the two-phase algorithms in the process of pattern mining, which also exists in the state-of-the-art algorithms in the domain of OSUM in a sequence database.

\subsection{Utility Mining}

The problem of HUIM is to select the high-utility itemsets (HUIs) from a transaction database where each row is an itemset and each item has a utility representing its importance based on the utility functions. Chan et al. \cite{chan2003mining} first incorporated the concept of utility for discovering highly desirable statistical patterns (i.e., HUIs); moreover, they presented a level-wise approach with a novel pruning strategy by relying on a weaker but anti-monotonic condition. Then, Liu et al. \cite{liu2005two} developed a milestone two-phase algorithm that performs two phases in the development of HUIM. They pioneered an upper bound named transaction-weighted utilization (\textit{TWU}), which satisfies the downward closure property. The anti-monotonicity of the upper bound guarantees that the complete set of HUIs is included in the candidates that the high TWU values generated in the first phase. After all candidates are identified, one more database scan is required to calculate their accurate utilities in the second phase, and the HUIs are outputted at the same time. Inspired by the two-phase algorithm, a series of two-phase methods such as incremental high-utility pattern \cite{ahmed2009efficient}, UP-Growth \cite{tseng2012efficient}, and Maximum Utility Growth \cite{yun2014high} were developed. When compared to the two-phase algorithm, the subsequent approaches extract HUIs performing two similar phases but avoid generating candidates that do not appear in the database in a pattern-growth manner. As explained, the two-phase algorithms have to retain a large number of candidates in memory and calculate their utilities in the second phase, which is the major causes of the high computational cost. In the worst situation discussed in \cite{fournier2019survey}, all candidates are evaluated in all transactions, which incurs several database scans. To avoid the scalability issue due to storing large-sized candidates, one-phase algorithms were designed, which achieved a significant breakthrough in terms of efficiency. In the mining process, the high utility of each candidate can be immediately identified once it is enumerated; thus, the memory for storing the candidate can be released soon. Till now, the one-phase HUIM approaches were extensively studied such as fast HUIM \cite{fournier2014fhm}, modified HUI-Miner \cite{peng2017mhuiminer}, and efficient HUIM \cite{zida2015efim} More details about HUIM can be obtained from comprehensive surveys \cite{fournier2019survey, gan2018surveyii}.


The primary goal of HUSPM is to identify the subsequences with high utility, which are referred to as high-utility sequential patterns (HUSPs), from a sequence database. Similar to the development history of HUIM, early HUSPM methods extracted desired HUSPs in two phases. Ahmed et al. are the pioneers who first incorporated utility into SPM \cite{ahmed2010novel}. They proposed a novel HUSPM framework for more real-world applicable information discovery. Further, they developed two two-phase algorithms, i.e., utility level (UL) and utility span (US). UL is a level-wise candidate generation method, which implies that it may generate candidates that do not appear in the database. The brute-force algorithm UL is more simple and straightforward than US; however, it generates too many candidates in the first phase and requires several database scans that are time consuming. The other algorithm, US, exploits a pattern-growth approach, which generates candidates based on relatively small-scale projected databases such that it requires a maximum of three database scans. With a striking resemblance to the two-phase HUIM algorithms, the two-phase methods of HUSPM also demonstrate two limitations. The first limitation is that a considerable amount of memory is occupied for storing the candidates generated in the first phase. The second one is that it is time consuming to compute actual utilities of candidates. To address the issue, a series of one-phase algorithms were proposed. Yin et al. \cite{yin2012uspan} formulated the HUSPM problem statement and designed a formal framework for HUSPM. A novel method USpan was introduced to discover the desired HUSPs using a new structure called lexicographic quantitative sequence tree (LQS-Tree), which represents the search space to be traversed. Moreover, they developed depth-first and width-first pruning strategies by adopting sequence-weighted utilization (\textit{SWU}) and sequence-projected utilization (\textit{SPU}) upper bounds, respectively. However, the \textit{SPU} is not a true overestimation of utility value, which implies that the width-first pruning strategy may prune the nodes representing HUSPs in the LQS-Tree. Then, Alkan et al. \cite{alkan2015crom} proposed the high utility sequential pattern extraction (HupsExt) algorithm based on a pattern tree. It can be noted that each node in the tree structure stores some necessary information for calculating the tighter upper bound named cumulated rest of match (\textit{CoRM}) and facilitating a pruning strategy called pruning before candidate generation. To more efficiently mine HUSPs, Wang et al. \cite{wang2016efficiently} extended USpan \cite{yin2012uspan} to HUS-Span, which can quickly discover patterns. Besides \textit{SWU}, the method adopts two tight upper bounds, prefix extension utility \textit{ (PEU)} and reduced sequence utility \textit{ (RSU)}, which can be easily obtained from a utility chain. The compact data structure expediates the calculation of not only upper bound values but also utility values. Till now, HUSPM algorithms have been considerably researched \cite{gan2020fast, gan2020proum}, and the mining process is becoming increasingly efficient owing to compact data structures and effective pruning strategies. 

\subsection{On-shelf Utility Mining}

Obviously, the aforementioned utility mining approaches by default consider that all items are on the shelf all the time in the market basket analysis. Thus, there exists a bias that the patterns having more on-shelf time are more likely to be extracted in HUSPM. To eliminate the bias, the problem of OSUM \cite{lan2011discovery, lan2014shelf, fournier2015foshu, radkar2015mining, dam2017efficient, lan2014discovery} was generalized in utility mining by considering the on-shelf time of items. Lan et al. \cite{lan2011discovery} first defined the task of OSUM in transaction data as the process of discovering the complete set of osHUIs from a transaction database with respect to a user-specified threshold. They also proposed an efficient two-phase algorithm with a data structure based on a periodical total transaction utility table that increases the execution efficiency. In the first phase, TP-HOU generates promising on-shelf utility itemsets that have high on-shelf utility in the current time period as candidates. Then, it calculates the actual on-shelf utility in the entire database in the second phase. As it is known, similar to the two-phase utility mining methods, the algorithm also suffers from the two drawbacks that degrade the efficiency. Consequently, their team extended the problem where certain items were associated with negative utility values \cite{lan2014shelf}. To cope with the issue, they designed the two-phase three-scan algorithm for mining osHUIs with negative profit (TS-HOUN), which requires only three database scans by adopting a proper upper bound and an effective itemset generation method. For further efficiency improvement, the mining process was reduced to one phase by Fournier et al. \cite{fournier2015foshu} in the faster osHUI mining (FOSHU) algorithm. FOSHU handles all time periods simultaneously and introduces novel strategies to handle negative values efficiently; consequently, it runs 1,000 times faster than the state-of-the-art TS-HOUN. Moreover, certain extension problem were generalized to extract some interesting patterns; for example, discovering top-$k$ osHUIs \cite{dam2017efficient} and mining osHUIs from dynamic updated databases \cite{radkar2015mining}.

Considering the on-shelf time, utility values, and sequential order, Lan et al. \cite{lan2014discovery} presented a new research issue, that is OSUM of sequence data, which is intrinsically more complex than the task in transaction data owing to a combinatorial explosion of the huge search space. They also developed a two-phase method TP-HOUS to efficiently mine osHUSPs in a temporal quantitative sequence database. Moreover, the sequence-utility upper bound (\textit{SUUB}), as well as a corresponding pruning strategy were designed to speed up the mining process. As a two-phase algorithm, TP-HOUS suffers from the common limitations of the aforementioned two-phase methods. To the best of our knowledge, TP-HOUS is the only existing method used for OSUM of sequence data. There is a significant room for improvement in terms of execution time, memory consumption, and scalability.

\section{Definitions and Problem Statement}

In this subsection, we introduce the significant definitions, including concepts, notations, and principles, used in the domain of OSUM. It is to be noted that certain basic definitions are derived from prior works \cite{lan2014discovery,gan2020proum}. Moreover, we formulated a normative problem statement of OSUM of sequence data.

\subsection{Preliminaries}

Let the finite set $I$ = $\{i_{1},i_{2},\ldots,i_{n}\}$ be a universal set, where the elements are the items that may appear in the problem. An itemset $v$ is a set of distinct items and satisfies $v \subseteq I$. If the length of $v$, that is, the number of items contained in the itemset, is equal to $l$, the itemset $v$ is called an $l$-itemset. A sequence $s$ = $<$$v_{1},v_{2},\ldots,v_{m}$$>$ is a list of itemsets in chronological order, where $v_{k} \subseteq I$ for $1 \leq k \leq m$. The number of elements contained in $s$ is the size of $s$. We define $|s|$ = $\sum_{k = 1}^{m}|v_{k}|$ as the length of $s$, and $s$ is called an $l$-sequence if $|s|$ = $l$. For example, $v$ = \{\textit{a} \textit{b}\} is a two-itemset, and $s$ = $<$\{\textit{a} \textit{f}\}, \{\textit{b} \textit{e} \textit{f}\}\}$>$ is a five-sequence as it consists of five items. 

\begin{definition}
	A quantitative item is defined as an ordered tuple ($i$:$q$) where $i \in I$ and $q$ is a positive integer representing the internal utility of item $i$. A quantitative itemset, denoted as $w$ = \{(\textit{$i_{1}$}:$q_{1}$) (\textit{$i_{2}$}:$q_{2}$) $\ldots$ (\textit{$i_{a}$}:$q_{a}$)\}, is a set of $q$-items. Similarly, we also quantify the sequence and define the quantitative sequence $r$ = $<$$w_{1},w_{2},\ldots,w_{b}$$>$, where $w_{k}$ is a quantitative itemset for $1 \leq k \leq b$. For brevity, we use the prefix symbol "$q$-" to denote the term quantitative. For example, a quantitative item/quantitative itemset/quantitative sequence can be denoted as $q$-item/$q$-itemset/$q$-sequence, respectively. 
\end{definition}

\begin{definition}
	Let $T$ = $\{t_{1},t_{2},\ldots,t_{n}\}$ denote a set of mutually disjoint time periods. A temporal $q$-sequence database $D$ is a set of $q$-sequences, i.e., $D$ = $\{\textit{QS}_{1,1},\textit{QS}_{1,2},\ldots,\textit{QS}_{n,m}\}$, where $\textit{QS}_{i,j}$ is $j$-th $q$-sequence in $i$-th time period. In general, a temporal $q$-sequence database $D$ can be represented as a set of triples, each of which is denoted as (\textit{TID}, \textit{SID}, $s$) where $s$ is a $q$-sequence, $\textit{TID} \in T$ is the time period of $s$ occurring, and \textit{SID} is the unique identifier of $s$ in the time period $t$. In addition, each item $i_{j} \in I$ that appears in the temporal $q$-sequence database is associated with a positive integer named external utility and its on-shelf time period information. To facilitate the statement, all $q$-sequences within the $t_j$ time period in $D$ can be identified as $\textit{TD}_{t_j}$.
\end{definition}

For convenience, we assume all items/$q$-items in an itemset/$q$-itemset are arranged in alphabetical ascending order in the remainder of this paper. As a running example, a temporal $q$-sequence database, including five $q$-sequences and six types of items, is listed in Table \ref{table1}, where $<$\{(\textit{b}:1) (\textit{d}:3)\}, \{(\textit{c}:4) (\textit{e}:1)\}$>$ is the first $q$-sequence within the time period \(\textit{t}_{1}\). Each item listed in Table \ref{table1} is associated with an external utility listed in Table \ref{table2}. In addition, the on-shelf time information of items is listed in Table \ref{table3}, where 1 represents the item is on the shelf in the current time period. For example, \textit{c} is on the shelf all the time, while \textit{f} is only on the shelf in \(\textit{t}_{3}\). In addition, $\textit{QS}_{1,1}$ and $\textit{QS}_{1,2}$ are included in $\textit{TD}_{1}$.
 
\begin{table}[!t]
	\centering
	\caption{Running example of a temporal $q$-sequence database}
	\label{table1}
	\begin{tabular}{|c|c|c|}  
		\hline 
		\textbf{TID} & \textbf{SID} & \textbf{$q$-sequence} \\
		\hline  
		\(\textit{t}_{1}\) & \(\textit{S}_{1}\) & $<$\{(\textit{b}:1) (\textit{d}:3)\}, \{(\textit{c}:4) (\textit{e}:1)\}$>$ \\ 
		\hline
		\(\textit{t}_{1}\) & \(\textit{S}_{2}\) & $<$\{(\textit{b}:2) (\textit{e}:3)\}, \{(\textit{c}:4)\}, \{(\textit{b}:1) (\textit{c}:3)\}$>$ \\  
		\hline  
		\(\textit{t}_{2}\) & \(\textit{S}_{1}\) & $<$\{(\textit{c}:3) (\textit{d}:4)\}, \{(\textit{a}:3) (\textit{c}:1)\}, \{(\textit{a}:2) (\textit{c}:3) (\textit{d}:1)\}$>$ \\
		\hline  
		\(\textit{t}_{2}\) & \(\textit{S}_{2}\) & $<$\{(\textit{a}:3) (\textit{d}:2)\}, \{(\textit{a}:1) (\textit{e}:2)\}, \{(\textit{c}:3)\}, \{(\textit{b}:2) (\textit{c}:4)\}$>$ \\
		\hline
		\(\textit{t}_{3}\) & \(\textit{S}_{1}\) & $<$\{(\textit{a}:4) (\textit{e}:2) (\textit{f}:2)\}, \{(\textit{c}:1) (\textit{e}:3)\}$>$ \\
		\hline
	\end{tabular}
\end{table}

\begin{table}[!t]
	\caption{Utility table}
	\label{table2}
	\centering
	\begin{tabular}{|c|c|c|c|c|c|c|}
		\hline
		\textbf{Item}	    & \textit{a}	& \textit{b}	& \textit{c}	& \textit{d}	& \textit{e}	& \textit{f} \\ \hline 
		\textbf{External utility}	& \$2 & \$3& \$1 & \$1 & \$2 & \$4 \\ \hline
	\end{tabular}
\end{table}

\begin{table}[!t]
	\caption{On-shelf time periods of items}
	\label{table3}
	\centering
	\begin{tabular}{|c|c|c|c|}
		\hline
		\diagbox{\textbf{Item}}{\textbf{Time}} & $t_1$	& $t_2$	& $t_3$ \\
		\hline 
		\textit{a} & 0 & 1 & 1 \\
		\hline 
		\textit{b} & 1 & 0 & 0 \\
		\hline 
		\textit{c} & 1 & 1 & 1 \\
		\hline 
		\textit{d} & 1 & 1 & 0 \\
		\hline 
		\textit{e} & 1 & 1 & 1 \\
		\hline 
		\textit{f} & 0 & 0 & 1 \\
		\hline
	\end{tabular}
\end{table}

Let us consider a $q$-itemset $w$ = \{(\textit{$i_{1}$}:$q_{1}$) (\textit{$i_{2}$}:$q_{2}$) $\ldots$ (\textit{$i_{a}$}:$q_{m}$)\} with length of $m$. We define the utility of the item $i_{k}$ within $w$ as $u(i_{k},w)$ = $q(i_{k},w) \times p(i_{k})$, where $q(i_{k},X)$ is the internal utility (usually representing quantity) of the item $i_{k}$ within $w$, and $p(i_{k})$ (usually representing unit profit) is the external utility of $i_{k}$ for $1 \leq k \leq m$. For the $q$-itemset $w$, its utility can be defined as $u(w)$ = $\sum_{k = 1}^{m}u(i_{k},w)$. Then, the utility of a $q$-sequence $s$ = $<$$v_{1}$, $v_{2}$, $\ldots$, $v_{n}$$>$, denoted as $u(s)$, is defined as $u(w)$ = $\sum_{k = 1}^{n}u(v_k)$. Moreover, the utility of a temporal $q$-sequence database $D$, denoted as $u(D)$, is the sum of the utility of each $q$-sequence contained in $D$. 

Consider the example in Table \ref{table1}. The utility of the item \textit{f} within the first itemset in $\textit{QS}_{3,1}$ can be calculated as $u$(\textit{b},\{(\textit{a}:4) (\textit{e}:2) (\textit{f}:2)\}) = 2 $\times$ \$4 = \$8. Then, we have $u(\textit{QS}_{3,1})$ = \$20 + \$7 = \$27 and $u(D)$ = \$12 + \$22 + \$22 + \$27 + \$27 = \$110.

\begin{definition}
	\label{contain}
	Let there be two itemsets $v$ and $w$. Let us consider that $v$ is a subset of $w$, denoted as $v \subseteq w$, if $\forall i \in v$ satisfies $i \in w$. Moreover, given two sequences $s$ = $<$$w_{1},w_{2},\ldots,w_{m}$$>$ and $r$ = $<$$v_{1},v_{2},\ldots,v_{n}$$>$, $s$ is a subsequence of $r$, denoted as $s \subseteq r$, if and only if there exists $m$ integers $ 1 \leq b_{1} < b_{2} < \ldots < b_{m} \leq n$ such that $w_{p}\subseteq v_{b_{p}}$ for $ 1 \leq p \leq m$.
\end{definition}

\begin{definition}
	\label{match}
	Given a $q$-sequence $s$ = $<$$v_1$, $v_2$, $\ldots$, $v_d$$>$ and sequence $r$ = $<$$w_1$, $w_2$, $\ldots$, $w_d'$$>$, if $d$ = $d'$ and the items in $v_k$ are the same as those in $w_k$ for $1 \leq k \leq d$, then we say r matches s, which is denoted as $r \sim s$. 
\end{definition}

\begin{definition}
	Given a $q$-sequence $s$ and a sequence $r$, if $s \sim s' \land r \subseteq s'$, then we say $r$ is contained in $s$. In the remainder of this article, for convenience, $r \sim s$ is used to indicate that $r$ is contained in $s$ ($s \sim s' \land r \subseteq s'$).
\end{definition}

For instance, $v_1$ = \{\textit{a}\} and $v_{2}$ = \{\textit{a} \textit{c}\} are both subsets of $w$ = \{\textit{a} \textit{b} \textit{c}\}, and $<$\{\textit{b}\}, \{\textit{c}\}$>$ is a subsequence of the sequence $<$\{\textit{b} \textit{d}\}, {\textit{c} \textit{e}\}$>$. Consider the running example where the sequence $<$\{\textit{b} \textit{d}\}, \{\textit{c} \textit{e}\}$>$ matches the $q$-sequence $\textit{QS}_{1,1}$. Then, we can say that the sequence $<$\{\textit{b}\}, \{\textit{c}\}$>$ is contained in $\textit{QS}_{1,1}$. Note a special circumstance where a sequence matching the $q$-sequence is also contained in it.

\begin{definition}
	Let us consider a sequence $r$ = $<$$w_1$, $w_2$, $\ldots$, $w_m$$>$ is contained in a $q$-sequence $s$ = $<$$v_1$, $v_2$, $\dots$, $v_n$$>$; according to Definition \ref{contain}, we assume that the integer sequence is $ 1 \leq k_{1} \leq k_{2} \ldots \leq k_{m} \leq n$; then, we say $r$ has an instance in $s$ at position $p$: $<$$k_{1}$, $k_{2}$, ..., $k_{m}$$>$.
\end{definition}

Note a sequence could have multiple instances in a $q$-sequence at different positions, and different instances may have the same extension position. Consider the sequence $r$ = $<$\{\textit{a}\}, \{\textit{c}\}$>$ in Table \ref{table1}. It has four instances in $\textit{QS}_{2,2}$ at positions $<$1, 3$>$, $<$2, 3$>$, $<$1, 4$>$, and $<$2, 4$>$, where the extension position of the first two instances is three.

\begin{definition}
	Let there be a sequence $r$ = $<$$w_1$, $w_2$, $\ldots$, $w_m$$>$ and a $q$-sequence $s$ = $<$$v_1$, $v_2$, $\ldots$, $v_n$$>$. Let us suppose $r$ has an instance in $s$ at position $p$: $<$$k_{1}$, $k_{2}$, ..., $k_{m}$$>$. Then, the utility of the instance is denoted as $u(r,\ p,\ s)$ and defined as $u(r,\ p,\ s)$ = $\sum_{\forall i \in r}^{}{u(i,\ k_j,\ s)}$, where $k_j$ is the corresponding position of item $i$ in $p$.
\end{definition}

For example, in Table \ref{table1}, the first instance of $r$ = $<$\{\textit{a}\}, \{\textit{c}\}$>$ in $\textit{QS}_{2,2}$ is at position $p$ = $<$1, 3$>$; then, the utility of this instance can be calculated as $u(r,\ p,\ s)$ = 3 $\times$ \$2 + 3 $\times $ \$1 = \$9. 

\begin{definition}
	Let there be a sequence $r$ and $q$-sequence $s$. Assume that the set of extension positions of $r$ in $s$ is $P$ = $\{p_1,p_2,\ldots,p_n\}$. The utility of $r$ in $s$ at extension position $p_i$ is defined as $u(r,\ p_{i},\ s)$ = $\max$\{$u$(t, $<$$ j_{1}$, $j_{2}$, $\ldots$, $p_{i}$$>$, s)$\ |\ j_{1} \leq j_{2} \leq \ldots \leq p_{i}$ and $<$$ j_{1},j_{2},\ldots,p_{i}$$> \in P(t,\ p_{i},\ s)$\}, where $P(r,\ p_{i},\ s)$ is the set of positions with extension position \(p_{i}\). The utility of $r$ in $s$ is the max utility value of all utilities at all its extension positions, which is defined as $u(r,s)$ = $\max\{u(r,\ p,\ s)\ |\ p \in P\}$.
\end{definition}

For example, in Table \ref{table1}, the first two instances of the sequence $r$ = $<$\{\textit{a}\}, \{\textit{c}\}$>$ in $\textit{QS}_{2,2}$ is at positions $<$1,3$>$ and $<$2,3$>$ with the same extension position 3. Then, we have $u(r,\ p_{i},\ \textit{QS}_{2,2})$ = $\max\{\$9, \$5\}$ = \$9. The utility of $r$ in $\textit{QS}_{2,2}$ is $u(r,\textit{QS}_{2,2})$ = $\max\{\$9, \$10\}$ = \$10.

\begin{definition}
	The sequence utility of a $q$-sequence $s$, denoted as $\textit{su}(s)$, is the sum of the utilities of all items contained in $s$. The periodical total sequence utility of time period $t$, denoted as $\textit{ptsu}(t)$, is defined as the sum of the sequence utilities of all $q$-sequences within the time period $t$ \cite{lan2014discovery}. 
\end{definition}

\begin{definition}
	 Given a sequence $r$ and time period $t$, the periodical utility of $r$ within $t$ is the sum of the utilities in each $q$-sequence with \textit{TID} = $t$, which is defined as $\textit{pu}(r,t)$ = $\sum_{r \subseteq s \land s \in \textit{TD}_{t}}u(r,s)$.
\end{definition}

Let us consider the time period $t_1$ in Table \ref{table1}. The sequence utility of $\textit{QS}_{1,1}$ is $\textit{su}(\textit{QS}_{1,1})$ = \$3 + \$3 + \$4 + \$2 = \$12; further, we have $\textit{ptsu}(t_1)$ = \$12 +\$22 = \$34. For example, in Table \ref{table1}, the periodical utility of $r$ = $<$\{\textit{a}\}, \{\textit{c}\}$>$ within time period 2 is $\textit{pu}(r,t_2)$ = \$9 + \$10 = \$19.

\begin{definition}
	The set of on-shelf time periods of a sequence $r$, denoted as $\textit{ot}(r)$, is the union of on-shelf time periods of all items contained in $r$. Given a temporal $q$-sequence database, the on-shelf utility of a sequence $r$, denoted as $\textit{ou}(r)$, is defined as $\textit{ou}(r)$ = $\sum_{s \in \textit{TD}_{t} \land t \in \textit{ot}(r)}u(r,s)$. Moreover, the on-shelf utility ratio of $r$ is defined as $\textit{our}(r)$ = $\textit{ou}(r)/\sum_{t \in\textit{ot}(r)}\textit{ptsu}(t)$.
\end{definition}

For example, given the sequence $r$ = $<$\{\textit{a}\}, \{\textit{c}\}$>$, we have $\textit{ot}(r)$ = $\{t_2,t_3\}$. Let us consider the running example in Table \ref{table1}. Suppose $r$ = $<$\{\textit{a}\}, \{\textit{c}\}$>$, we have $\textit{ou}(r)$ = \$19 + \$9 = \$28, as $r$ is only on the shelf within time periods $t_2$ and $t_3$. Then, we can easily obtain $\textit{our}(r)$ = \$28 / (\$49 + \$27) = 36.8\%. 

\section{Proposed MDUS$ _\text{EM} $ Algorithm} 

By referring to the only existing algorithm TP-HOUS \cite{lan2014discovery} that splits the mining process into two phases, we designed an efficient two-phase method, namely OSUMS. Similar to TP-HOUS, OSUMS scans the original database once to discover all promising osHUSPs in the first phase. Then, in the second phase, OSUMS calculates the actual on-shelf utility values of the items and output the complete set of osHUSPs. For further efficiency improvement, we introduce several novel data structures, upper bounds, and strategies. Although OSUMS is more efficient than TP-HOUS, it also suffers from the intrinsic two limitations of the two-phase approaches. The first limitation is that retaining the promising osHUSPs generated in the first phase occupy large amounts of memory; the second limitation is that identifying whether each promising osHUSP has an on-shelf high utility (i.e., calculating their actual on-shelf utility) incurs significantly high computational costs. To overcome these issues, we improved OSUMS and proposed a one-phase method OSUMS$^{+}$. When compared to OSUMS, the OSUMS$^{+}$ method utilizes similar storage data structures and upper bounds, but adopts two global strategies to check one promising osHUSP immediately as long as it is extracted. 

Note that if a sequence is not a promising osHUSP within each time period, then it is absolutely impossible for it to be an osHUSP in the database. We present the details of the two approaches below.

\subsection{OSUMS Approach}
\label{osums}

To discover osHUSPs efficiently, TP-HOUS adopts an upper bound \textit{SUUB} as well as a pruning strategy to reduce the search space. However, the upper bound \textit{SUUB} is so loose that TP-HOUS generates abundant candidates in the mining process. To significantly reduce the number of candidates, OSUMS adopts two upper bounds \textit{TPEU} and \textit{TRSU} based on \textit{PEU} and \textit{RSU} designed in \cite{wang2016efficiently}, respectively, as well as two local pruning strategies. Note that the processes of determining promising osHUSPs in OSUMS within each time period are independent; thus, we called the pruning strategies in OSUMS local ones. We also designed two storage data structures that are convenient for calculating the two upper-bound values and the actual on-shelf utility values. In addition, with the purpose of expediating the checking process that determines osHUSPs in the second mining process, we developed a novel structure named candidate tree (\textit{CTree}) and an efficient strategy. In theory, OSUMS is able to outperform the relatively simple method TP-HOUS.

To facilitate the statement of the proposed methods, first, we present certain basic and essential definitions as shown below.

\begin{definition}
	Let there be a sequence $r$ = $<$$w_1$, $w_2$, $\ldots$, $w_m$$>$ and an item $i$. We define the $I$-Extension operation of $r$ as the process that appends $i$ to the last itemset $w_m$. The operation results in a new sequence $r'$, which is denoted as $<$$ t\oplus i $$>$ and is called an $I$-Extension sequence of $r$. The other operation of $S$-Extension of $r$ is defined as placing a new itemset $w_{m+1}$ that only contains $i$ behind $w_m$, which also generates a new sequence $r'$. Generated by the $S$-Extension operation, $r'$, denoted as $<$$t\otimes i$$>$, is called an $S$-Extension sequence of $r$.
\end{definition}

Consider an example $r$ = $<$\{\textit{a}\}$>$, $r_1$ = $<$\{\textit{a} \textit{c}\}$>$ is an $I$-Extension sequence of $r$, while $r_2$ = $<$\{\textit{a}\}, \{\textit{c}\}$>$ is an $S$-Extension sequence of $r$. Obviously, any nonempty sequence can be generated by a series of $I$/$S$-Extension operations from an empty sequence $<$ $>$.

\begin{definition}
	Given a sequence $r$ and $q$-sequence $s$, let us assume that $r$ has instances in $s$ at extension position $p$ and the extension item is $i'$. The rest utility $r$ in $s$ at extension position $p$, denoted as $\textit{ru}(r,\ p,\ s)$, is defined as $\textit{ru}(r,\ p,\ s)$ = $\sum_{i \in s \land i \succ i'}^{}u(i,\ j',\ s)$, where $p \geq j'$ and $i$ represents the $q$-items located behind $i'$ in $s$. 
\end{definition}

Let us consider the example listed in Table \ref{table1}. The sequence $r$ = $<$\{\textit{a}\}, \{\textit{a}\}$>$ has an instance in $\textit{QS}_{2,2}$ at the extension position 2; then, we obtain $\textit{ru}(r,\ 3,\ \textit{QS}_{2,2})$ = \$4 + \$3 + \$6 +\$4 = \$17.

Let us consider a sequence $r$ and $q$-sequence $s$ and let us assume that $r$ has an instance in $s$ at extension position $p$. According to \cite{wang2016efficiently}, the \textit{PEU} upper bound of $r$ in $s$ at $p$ is defined as: 
\[\textit{PEU}(r,\ p,\ s)\begin{cases}
u(r,\ p,\ s)+\textit{ru}(r,\ p,\ s) & \textit{ru}(r,\ p,\ s)>0\\
0&otherwise
\end{cases}\]
Suppose $r$ has several instances in $s$ at $m$ extension positions, $p$ = $\{p_1,p_2,\ldots,p_m\}$. The \textit{PEU} value of $r$ in $s$ is the maximum value of \textit{PEU} at all extension positions, which can be represented as: $\textit{PEU}(r,s)$ = $\max\{ \textit{PEU}(r,\ p_{1}$, $ s)$, $\ldots$, $\textit{PEU}(r,\ p_{m}, s)\}$. Based on \textit{PEU}, we introduce the \textit{TPEU} upper bound, which can be defined as $\textit{TPEU}(r, t)$ = $\sum_{s \in \textit{TD}_{t} \land r \subseteq s}^{}\textit{PEU(r,s)}$.

For instance, in Table \ref{table1}, let us consider a sequence $r$ = $<$\{\textit{a}\}, \{\textit{c}\}$>$. We obtain the \textit{PEU} values $\textit{PEU}(r,\ \textit{QS}_{2,1})$ = \$9 + \$1 = \$10 and $\textit{PEU}(r,\ \textit{QS}_{2,2})$ = \$9 + \$10 = \$19. Thus, the \textit{TPEU} of $r$ in time period $t_2$ can be calculated as $\textit{TPEU}(r, t_2)$ = \$10 + \$19 = \$29.

Assume the sequence $r$ is an extension sequence of the sequence $\alpha$, then the \textit{RSU} upper bound of $r$ in a $q$-sequence $s$ \cite{wang2016efficiently} is defined as follows. 
\[\textit{RSU}(r,s)\begin{cases}
\textit{PEU}(\alpha,s) & \alpha \subseteq s \land r \subseteq s\\
0&otherwise
\end{cases}\]
Based on \textit{RSU}, we define a novel upper bound \textit{TRSU} that can be adopted in the problem of OSUM in sequence data, that is, $\textit{TRSU}(r,t)$ = $\sum_{s \in D \land {TD}_{t} \subseteq s}^{}\textit{RSU(r,s)}$.

Let there be two sequences $r$ = $<$\{\textit{a}\}, \{\textit{c}\}$>$ and $r'$ = $<$\{\textit{a}\}, \{\textit{c}\}, \{\textit{b}\}$>$. It can be determined that $\textit{TRSU}(r',t_2)$ = \$19 as $r' \subseteq \textit{QS}_{2,2}$, $r' \subseteq \textit{QS}_{2,2}$, and $\textit{PEU}(r,\ \textit{QS}_{2,2})$ = \$19, which we have calculated.

To clearly explain the proposed algorithms, we design a structure called periodical lexicographic sequence forest (\textit{PLS-Forest}) to represent the entire search space in the mining process of OSUMS based on the concept of lexicographic sequence tree (\textit{LS-Tree}) \cite{yin2013efficiently}. \textit{LS-Tree} is a tree structure where each node represents a sequence, with the root representing an empty sequence. For a node in the \textit{LS-Tree}, the sequence represented by it is the extension sequence of that of its parent. Each \textit{LS-Tree} in \textit{PLS-Forest} represents the search space within one of the time periods. We show an example of \textit{PLS-Forest} in Figure \ref{forest}. The \textit{LS-Trees} in a \textit{PLS-Forest} may be different as items are on the shelf at different time periods. Without loss of generality, we present all children of a node in alphabetical ascending order. Note that \textit{PLS-Forest} and \textit{LS-Tree} are both abstractly conceptual structures, and the real search space may be different depending on different situations \cite{gan2020proum}. 

\begin{figure}[htbp]
	\centering
	\includegraphics[width=0.7\linewidth]{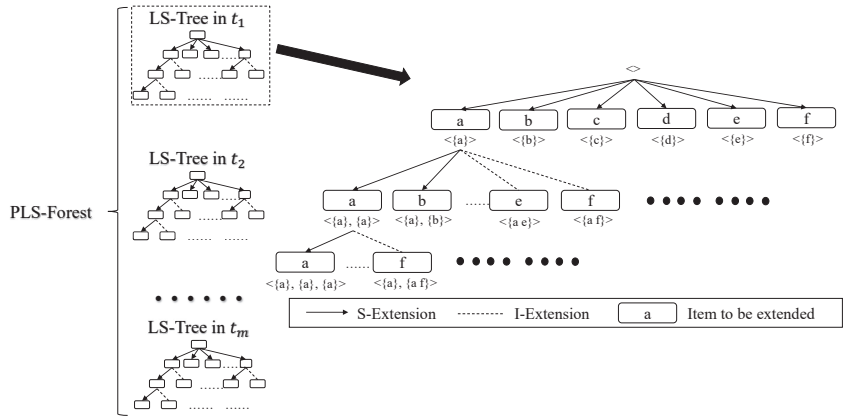}
	\caption{Partial periodical lexicographic sequence forest for the running example}
	\label{forest}
\end{figure}

\subsubsection{Storage Data Structure}

Based on the data structure $q$-matrix in USpan \cite{yin2012uspan}, we introduce a novel data structure, namely periodical $q$-matrix, where each matrix can represent a $q$-sequence in a temporal $q$-sequence database. Besides including a matrix for storing utility and rest utility information, a periodical $q$-matrix also consists of the time period and identifier of the $q$-sequence. In practice, the periodical $q$-matrices with the same time period are placed in one list, which can be indexed by the time period in memory. For better visualization, we present the periodical $q$-matrices of the running example in Figure \ref{matrix}, where we only show details of the periodical $q$-matrix of $\textit{QS}_{2,1}$ for brevity. Here, we briefly introduce the data structure of the $q$-matrix, where elements can be indexed by on-shelf items within the time period and $q$-itemset numbers. Each element has three values, where the first value shows the utility of the $q$-item and second is the sum of utilities of $q$-items behind it (also called rest utility). The items that do not appear in the $q$-itemset are given a utility value of zero.

Let us observe the record for item \textit{a} in the periodical $q$-matrix of $\textit{QS}_{2,1}$ in Figure \ref{matrix}. The terms in the first entry are both \$0, as \textit{a} does not appear within the first $q$-itemset in $\textit{QS}_{2,1}$. According to the definition of utility, $u(a,\ 2,\ \textit{QS}_{2,1})$ = 3 × \$2 = \$6 and $\textit{ru}$ = \$1 + \$4 + \$3 + \$1 = \$9 can be calculated; then, we obtain the second entry (\$6, \$9). The remaining calculations can be performed in the same manner.

\begin{figure}[htbp]
	\centering
	\includegraphics[width=0.7\linewidth]{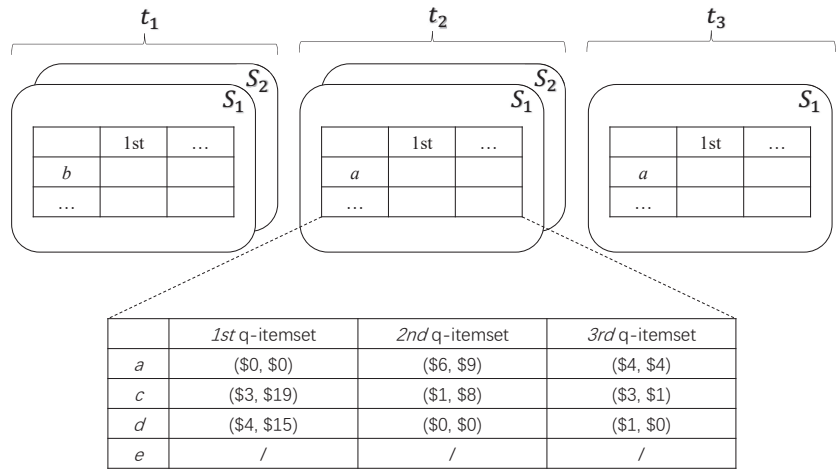}
	\caption{Periodical $q$-matrices for the running example}
	\label{matrix}
\end{figure}

A simple and intuitive method to calculate the utility of a candidate is by scanning the entire database. Obviously, the brute-force method incurs significantly high computational costs as the method has to check the $q$-sequences that have no possibility to contain the candidate. It is easy to understand that a $q$-sequence may contain a sequence only if it contains the prefix of the sequence. Thus, OSUMS recursively constructs projected databases represented by a compact and efficient data structure utility chain \cite{wang2016efficiently} for reducing the scan scope. The projected databases of candidates have a relatively small scale and store the necessary information for calculating the values of utilities and upper bounds. It is noted that the candidate refers to those sequences that must be checked to determine if they have high on-shelf utility. The utility chain of a sequence $r$ consists of multiple utility lists and a head table. The head table includes a series of tuples (\textit{SID}, \textit{PEU}), each of which corresponds to a $q$-sequence containing $r$ and indexes a utility list. The \textit{SID} value is the identifier of the $q$-sequence within the time period, and \textit{PEU} is the upper bound value of $r$ in this $q$-sequence. Suppose $r$ has $m$ extension positions $P$: $\{p_1$, $p_2$, $\ldots$, $p_m\}$ in the $q$-sequence $s$, the corresponding utility list of $s$ has $m$ utility elements that consist of the following three fields: 1) field \textit{tid} presents the $i$-th extension position $p_i$, 2) field \textit{acu} is the utility of $r$ at the $i$-th extension position $p_i$ (i.e., $u(r,\ p_i,\ s)$), and 3)field \textit{ru} shows the rest utility of $r$ at the $i$-th extension position $p_i$ in $s$ (i.e., $\textit{ru}(\textit{fs},\ p_i,\ S)$). 

\begin{figure}[htbp]
	\centering
	\includegraphics[width=0.56\linewidth]{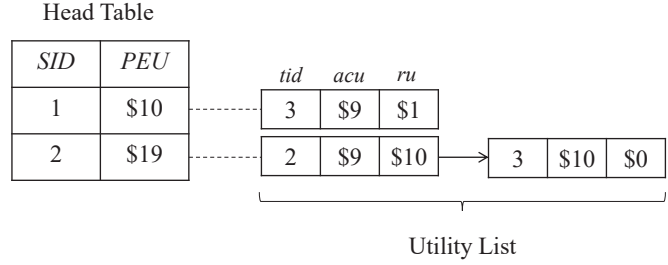}
	\caption{Projected database of $r$ = $<$\{\textit{a}\}, \{\textit{c}\}$>$ in OSUMS}
	\label{db}
\end{figure}

Figure \ref{db} illustrates the projected database, represented by a utility chain, of the sequence $r$ = $<$\{\textit{a}\}, \{\textit{c}\}$>$ within the time period $t_2$. Let us consider the second utility list, which corresponds to $\textit{QS}_{2,2}$, as an example. $r$ has two extension positions in $\textit{QS}_{2,2}$, that is, 2 and 3; therefore, the utility list consists of two elements. The utilities as well as the rest utilities at these two extension positions are $u(r,\ 2,\ s)$ = \$9, $u(r,\ 3,\ s)$ = \$10, $\textit{re}(r,\ 2,\ s)$ = \$10, $\textit{re}(r,\ 3,\ s)$ = \$0, respectively. Thus, the \textit{PEU} value can be \$19 according to the definition of \textit{PEU}. As it can be observed, the utility chain consists of essential and compact information of $r$; consequently, scanning the projected database is sufficient to check whether a candidate has high on-shelf utility. Based on the projected database of an $l$-sequence, the projected databases of its extension sequences (i.e., $l+1$-sequence) can be constructed. More details about utility chain can be referred to \cite{wang2016efficiently}.

To avoid the redundant utility calculations in the second mining process, we propose a novel data structure called \textit{CTree} based on a trie (i.e., prefix tree) for storing certain calculation results in the first mining process. Just as its name implies, the tree structure buffers the candidates whose periodical utility values have been calculated within at least one time period. Moreover, the structure also retains the related information by indicating whether it must be checked at certain time periods. All the descendants of a node have a common prefix of the string associated with that node, and the root is associated with the empty sequence. There are certain virtual nodes, denoted as -1, which indicate the end of the itemset. Excluding these virtual nodes, any other node represents a sequence. To facilitate efficient utility calculation and support multiple database updates, each node $N$ in \textit{CTree} contains the following auxiliary information:
\begin{itemize}
	\item \textit{Item}: The item of $N$.
	\item \textit{Seq}: The candidate represented by $N$.
	\item \textit{On-shelfTime}: A bit set represents the on-shelf information of \textit{Seq}. The size is equal to the number of time periods. The $i$-th bit value is set to 1 when \textit{Seq} is on the shelf in the $i$-th time period. Otherwise, this value is set to 0.
	\item \textit{CalculatedTime}: A bit set represents the time periods when the utility of \textit{Seq} has been calculated. The size is also equal to the number of time periods. The $i$-th bit value is set to 1 when the utility of \textit{Seq} has been calculated in the $i$-th time period. Otherwise, this value is set to 0.
	\item \textit{ChildrenList}: A list containing the children of node $N$.
	\item \textit{isPromising}: A Boolean value indicating whether \textit{Seq} is a promising osHUSP within at least one time period. 
	\item \textit{Utility}: A list that stores the periodical utility values of \textit{Seq} within time periods. The size of the list is the number of time periods. If \textit{Seq} is not on the shelf in $i$-th time period, or it is on the shelf but its periodical utility in the $i$-th time period has not been calculated, the bit value is set to 0. Otherwise, this value is set to 1.
	\item \textit{CUtility}: The sum of values in the list \textit{Utility}.
\end{itemize}

\begin{figure}[htbp]
	\centering
	\includegraphics[width=0.7\linewidth]{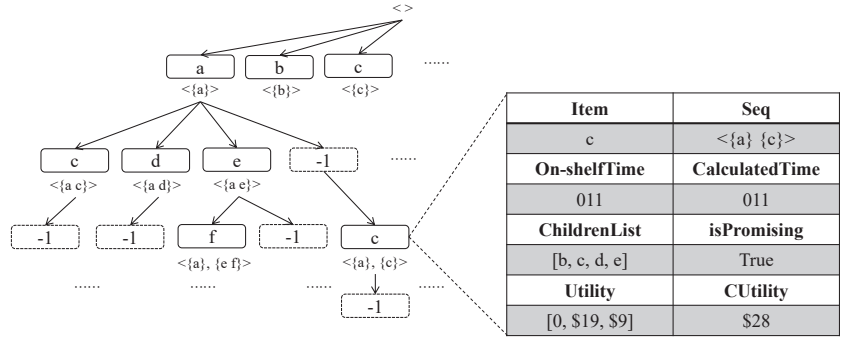}
	\caption{Partial candidate tree for the running example}
	\label{ctree}
\end{figure}

Figure \ref{ctree} illustrates the \textit{CTree} of the running example listed in Table \ref{table1} when $\xi$ = 0.3. For brevity, we only provide the values of \textit{Item} and \textit{Seq} for each node, and the full structure of the node $N$ that represents the sequence $r$ = $<$\{\textit{a}\}, \{\textit{c}\}$>$. Subsequently, we discuss the process of constructing the node $N$ and how to update its information when $\xi$ = 0.3. In the first phase, $r$ is generated as a candidate and its periodical utility is calculated as \$19 within the time period $t_2$. Then, OSUMS builds the node $N$ representing $<$\{\textit{a}\}, \{\textit{c}\}$>$ as the child of the node $N'$ representing $<$\{\textit{a}\}$>$ because the candidate $r$ first occurs in the mining process. After construction, OSUMS updates the information of $N$ and the children of $N'$. Finally, OSUMS sets the Boolean field \textit{isPromising} to TRUE, as $r$ is a promising osHUSP within the time period $t_2$. Next, when OSUMS generates $r$ again within the time period $t_3$, updating the auxiliary information of $N$ and $N'$ is sufficient without a construction operation.

In the second phase, OSUMS traverses the \textit{CTree} with a depth-first strategy. If the sequence represented by the current node is not a promising osHUSP within any time period, then it is skipped. If not, the AND operation is used to obtain the time periods when the candidate is on the shelf but its periodical utility has not been calculated. Then, OSUMS checks whether the candidate is an osHUSP by a series of calculations. Clearly, OSUMS fully exploits the calculation results of the first mining process and avoids the redundant utility calculations. Furthermore, let us assume that the average length of candidates is $m$ and number of candidates is $n$; then, maintaining the \textit{CTree} with time complexity \textit{CTree} $O(m)$ is a linear time operation; however, updating the information of promising osHUSPs is of polynomial time $O(mn)$ in TP-HOUS \cite{lan2014discovery}.

\subsubsection{Strategies}

As it can be observed in Figure \ref{forest}, \textit{LS-Forest} is an enormous structure, especially when there are a large number of items appearing in the database. To avoid critical combinatorial explosion of the search space, we propose two pruning strategies according to the download closure property of the two upper bounds \textit{TPEU} and \textit{TRSU}. The two upper bounds are tighter than the only existing one \textit{SUUB} \cite{lan2014discovery} in the problem of OSUM of sequence data, which implies that our method OSUMS is able to reduce more search space and generate less candidates. It is not difficult to understand that the CTree may be different under various $\xi$ settings. This is because certain branches in \textit{PLS-Forest} could be pruned by the pruning strategies, and the representing candidates are not generated.

\begin{theorem}
	\label{TPEU}
	Given a temporal $q$-sequence database $D$ and two sequences $r$ and $r'$, assuming that the node representing $r$ is a descendant of the node representing $r'$ (i.e., $r'$ is a prefix sequence of $r$), it can be obtained within a time period $t$ in $D$ that $$\textit{our}(r,t) \leq \textit{TPEU}(r',t)/\textit{ptsu}(t).$$
\end{theorem}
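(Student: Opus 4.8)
The plan is to clear the common denominator and reduce the claim to a pure utility inequality, then establish that inequality first at the level of a single $q$-sequence and afterwards by summation. Observe that $\textit{ptsu}(t)$ is a fixed positive quantity for the fixed time period $t$, and that $\textit{our}(r,t)$ is the periodical on-shelf utility ratio $\textit{pu}(r,t)/\textit{ptsu}(t)$. Hence the target inequality $\textit{our}(r,t) \leq \textit{TPEU}(r',t)/\textit{ptsu}(t)$ is equivalent, after multiplying both sides by $\textit{ptsu}(t)$, to the cleaner statement $\textit{pu}(r,t) \leq \textit{TPEU}(r',t)$. So it suffices to prove this.

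The core step is a per-sequence bound: for every $q$-sequence $s \in \textit{TD}_t$ with $r \subseteq s$, I would show $u(r,s) \leq \textit{PEU}(r',s)$. Since $r'$ is a prefix of $r$, any $s$ containing $r$ also contains $r'$, so the right-hand side is well defined. Fix an instance of $r$ in $s$ that attains $u(r,s)$, say at extension position $p$. The prefix $r'$ is matched by the corresponding initial sub-instance of this instance, ending at some extension position $p'$ of $r'$, while every item of $r$ lying beyond $r'$ occurs strictly after the extension item of $r'$ in $s$. Consequently the utility of this instance decomposes as the utility of the $r'$-part plus the utility of the trailing items, and the latter is by definition bounded by the rest utility $\textit{ru}(r',p',s)$. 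This yields $u(r,p,s) \leq u(r',p',s) + \textit{ru}(r',p',s) = \textit{PEU}(r',p',s) \leq \textit{PEU}(r',s)$; taking the maximum over all instances of $r$ preserves the bound and gives $u(r,s) \leq \textit{PEU}(r',s)$. This is precisely the $\textit{PEU}$ upper-bound property of the prefix established in Wang et al.~\cite{wang2016efficiently}, specialized to the prefix relation between $r'$ and $r$.

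Finally I would aggregate. Summing the per-sequence bound over all $s \in \textit{TD}_t$ with $r \subseteq s$ gives $\textit{pu}(r,t) = \sum_{r \subseteq s \land s \in \textit{TD}_t} u(r,s) \leq \sum_{r \subseteq s \land s \in \textit{TD}_t} \textit{PEU}(r',s)$. Because $\textit{PEU}(r',s) \geq 0$ for every $s$, and because $\{s : r \subseteq s\} \subseteq \{s : r' \subseteq s\}$ (every sequence containing $r$ contains its prefix $r'$), enlarging the index set from the former to the latter only adds nonnegative terms; hence the right-hand side is at most $\sum_{r' \subseteq s \land s \in \textit{TD}_t} \textit{PEU}(r',s) = \textit{TPEU}(r',t)$. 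Chaining the inequalities yields $\textit{pu}(r,t) \leq \textit{TPEU}(r',t)$, and dividing through by $\textit{ptsu}(t)$ recovers the theorem.

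The main obstacle is the per-sequence bound in the second paragraph: one must argue carefully that the items of $r$ beyond the prefix $r'$ genuinely all lie after the extension item of $r'$ at a \emph{matching} extension position, so that their aggregate utility is captured by $\textit{ru}(r',p',s)$, and that passing to the maximum over the (possibly many) instances of $r$ does not break the inequality. The index-set enlargement and the final division are routine and rely only on nonnegativity of $\textit{PEU}$ and positivity of $\textit{ptsu}(t)$.
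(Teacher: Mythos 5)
Your proposal is correct and follows essentially the same route as the paper's proof: decompose the utility of an instance of $r$ into the utility of the initial sub-instance matching the prefix $r'$ plus the utility of the trailing items, bound the latter by the rest utility to get $u(r,s)\leq\textit{PEU}(r',s)$, sum over sequences, and divide by $\textit{ptsu}(t)$. If anything, you are slightly more careful than the paper at two points it glosses over --- passing from a particular sub-instance of $r'$ to the maximum defining $u(r',p',s)$, and justifying the enlargement of the summation index set from $\{s : r\subseteq s\}$ to $\{s : r'\subseteq s\}$ via nonnegativity of \textit{PEU}.
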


\begin{proof}
	\label{proof_TPEU}
	Let $r$ = $r' \bullet r''$, where $\bullet$ is a common concatenation operation of sequences. $r''$ is a nonempty sequence as $r'$ is a prefix sequence of $r$. Given a $q$-sequence $s \in \textit{TD}_t$, if $r$ is contained in $s$, then $r'' \subseteq s$ and $r' \subseteq s$. The utility of $r$ in $s$ is the sum of two parts, which can be denoted as $u(r,s)$ = $u(r',\ p,\ S)$ + $u_p(r'')$. $u(r',\ p,\ S)$ shows the utility of an instance of $r'$ at extension position $p$ in $s$, and $u_p(r'')$ is the utility of an instance of $r''$ in $s$ where the first item of such instance is after $p$. Obviously, $u_p(r'') < \textit{ru}(r',\ p,\ s)$. Then, the following can be obtained:
	\begin{align*}
	u(r,s) &= u(r',\ p,\ s) + u_p(r'')  \\
	&< u(r',\ p,\ s) + \textit{ru}(r',\ p,\ s)  \\
	&\leq max\{u(r',\ p,\ s) + \textit{ru}(r',\ p,\ s)\}  \\
	&\leq \textit{PEU}(r',s).
	\end{align*}
	Since the $q$-sequence containing $r$ must contain $r'$ as $r' \sqsubseteq r$; we have the relationship
	$\textit{pu}(r,t)$ = $\sum_{\forall s \in \textit{TD}_t \land r \sqsubseteq s}^{}u(r,s)$ $\leq$ $\sum_{\forall s \in \textit{TD}_t \land r' \sqsubseteq s}^{}{}\textit{PEU}(r',s)$ = $\textit{TPEU}(r',t)$ within the time period $t$ of $D$. Finally, $\textit{pu}(r,t)/\textit{ptsu}(t) $ = $\textit{our}(r,t)\leq \textit{TPEU}(r',t)/\textit{ptsu}(t)$.
\end{proof}

By utilizing the download closure property of the \textit{TPEU} upper bound in a time period, we designed a depth pruning strategy called local depth pruning (LDP). Given a sequence $r$ represented by a node $N$ in \textit{LS-Tree} of time period $t$, a minimum on-shelf utility threshold $\xi$, and a temporal $q$-sequence database $D$, the descendants of $N$ can be safely pruned without affecting the mining results when $\textit{TPEU(r,t)}/\textit{ptsu}(t) < \xi$. 

\begin{theorem}
	\label{TRSU}
	Given a temporal $q$-sequence database $D$ and two sequences $r$ and $r'$, suppose the node representing $r$ is the descendant of the node representing $r'$ or $r$ = $r'$, we have $$\textit{our}(r,t) \leq \textit{TRSU}(r',t)/\textit{ptsu}(t).$$
\end{theorem}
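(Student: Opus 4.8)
The plan is to reduce this statement to the already-established Theorem~\ref{TPEU} by exploiting the fact that \textit{RSU} is defined through the \textit{PEU} of the \emph{parent} of $r'$. Write $\alpha$ for the sequence of which $r'$ is an immediate extension, so that by the definition of \textit{RSU} we have $\textit{RSU}(r',s)=\textit{PEU}(\alpha,s)$ whenever $\alpha \subseteq s$ and $r' \subseteq s$, and $\textit{RSU}(r',s)=0$ otherwise. Since $r$ is a descendant of $r'$ (or $r=r'$) and $r'$ is in turn an extension of $\alpha$, both $r'$ and $\alpha$ are prefix sequences of $r$, which is the structural fact the whole argument rests on.

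First I would fix a $q$-sequence $s \in \textit{TD}_t$ with $r \subseteq s$ and argue that the first branch of the \textit{RSU} definition is the relevant one: because $\alpha$ and $r'$ are prefixes of $r$, containment of $r$ in $s$ forces $\alpha \subseteq s$ and $r' \subseteq s$, hence $\textit{RSU}(r',s)=\textit{PEU}(\alpha,s)$. Next I would reuse the core inequality established in the proof of Theorem~\ref{TPEU}: for any prefix $\beta$ of $r$ and any $s$ containing $r$, one has $u(r,s)\le \textit{PEU}(\beta,s)$. Applying this with $\beta=\alpha$ gives $u(r,s)\le \textit{PEU}(\alpha,s)=\textit{RSU}(r',s)$ for every such $s$.

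Summing this pointwise bound over all $s\in\textit{TD}_t$ that contain $r$ yields
\[
\textit{pu}(r,t)=\sum_{s\in\textit{TD}_t \land r\subseteq s} u(r,s)\;\le\;\sum_{s\in\textit{TD}_t \land r\subseteq s}\textit{RSU}(r',s).
\]
To finish, I would enlarge the index set on the right from ``$s$ containing $r$'' to ``all $s\in\textit{TD}_t$''; since every \textit{RSU} value is nonnegative, this can only increase the sum, and the right-hand side then becomes exactly $\textit{TRSU}(r',t)$. Dividing through by $\textit{ptsu}(t)$ and using $\textit{our}(r,t)=\textit{pu}(r,t)/\textit{ptsu}(t)$ delivers the claim.

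The main obstacle I anticipate is the bookkeeping in the middle step: one must verify carefully that for \emph{every} $s$ counted in $\textit{pu}(r,t)$ the \textit{RSU} value coincides with $\textit{PEU}(\alpha,s)$ rather than collapsing to $0$, which hinges precisely on $r'$ and $\alpha$ being honest prefixes of $r$ (guaranteed by the descendant hypothesis). The remaining nonnegativity padding that extends the sum out to the full $\textit{TRSU}(r',t)$ is routine, but it is exactly what lets \textit{TRSU} serve as a valid upper bound while remaining tighter than the cruder $\textit{TPEU}(\alpha,t)$ one would obtain by restricting to the full index set from the outset.
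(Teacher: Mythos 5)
Your proposal is correct and follows essentially the same route as the paper: introduce the parent $\alpha$ of $r'$, invoke the pointwise bound $u(r,s)\le \textit{PEU}(\alpha,s)$ already established in the proof of Theorem~\ref{TPEU}, identify this with $\textit{RSU}(r',s)$ via the definition of \textit{RSU}, and sum over $\textit{TD}_t$ before dividing by $\textit{ptsu}(t)$. Your explicit handling of the nonnegativity padding when enlarging the index set is a slightly cleaner piece of bookkeeping than the paper's case split on whether $s$ contains $r'$, but it is the same argument.
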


\begin{proof}
	\label{proof_TRSU}
	Let us assume that $r'$ is the extension sequence of the sequence $\alpha$; then, $\alpha$ is also a prefix of $r$. Given a $q$-sequence $s$, we have $u(t,s) \leq \textit{PEU}(\alpha,s)$ according to Proof \ref{proof_TPEU}. Based on the definition of \textit{RSU}, we have $\textit{RSU}(r',s) = \textit{PEU}(\alpha,s)$ when $s$ contains both $r'$ and $\alpha$. Then, $u(r,s) \leq \textit{RSU}(r',s)$. Conversely, if $s$ does not contain $r'$, then $u(r,s)$ = $\textit{RSU}(r',s)$ = 0, as $s$ does not contain $r$. A conclusion can be drawn that $u(r,s) \leq \textit{RSU}(r',s)$. The $q$-sequence containing $r$ contains $r'$ as $r' \sqsubseteq r$; thus, it can be obtained that $\textit{pu}(r,t)$ = $\sum_{\forall s \in \textit{TD}_t \land r \sqsubseteq s}^{}u(r,s)$ $\leq$ $\sum_{\forall s \in \textit{TD}_t \land r' \sqsubseteq s}^{}{}\textit{PEU}(r',s) $ = $\textit{RSU}(r',t)$ within the time period $t$ of $D$. Finally, $\textit{pu}(r,t)/\textit{ptsu}(t)$ = $ \textit{our}(r,t) \leq \textit{TRSU}(r',t)/\textit{ptsu}(t)$.
\end{proof}

As shown in Proof \ref{proof_TRSU}, the \textit{TRSU} upper bound demonstrates anti-monotonicity within any time periods in a temporal $q$-sequence database. Based on \textit{TRSU}, we developed a width pruning strategy called local width pruning (LWP). Given a sequence $r$ represented by a node $N$ in \textit{LS-Tree} of time period $t$, minimum on-shelf utility threshold $\xi$, and temporal $q$-sequence database $D$, OSUMS can prune the node $N$ itself and a descendant of $N$ when $\textit{TRSU(t,t)}/\textit{ptsu}(t) < \xi$. 

Varying from the upper bound \textit{TPEU}, \textit{TRSU} of a sequence $r$ is an overestimation over the utility values of itself and its descendants of $N$. The reason why we name the two pruning strategies as local strategies is that the mining processes of OSUMS are independent among time periods; consequently, they only prune in one \textit{LS-Tree} of the \textit{PLS-Forest} at a time and do not affect the search space of other time periods. 

Moreover, we also designed a novel strategy termed avoid redundancy calculations (ARC) to filter candidates when traversing the \textit{CTree} to avoid calculating actual on-shelf utility. Clearly, a sequence $r$ in \textit{CTree} may have its periodical utility values within certain time periods and have a value of zero within some others according to the field \textit{Utility}. The reason why zero occurs within the time periods is that the sequence is not on the shelf or the sequence has no possibility to be a promising osHUSP, which is pruned by the aforementioned two local pruning strategies. We denote the set of the time periods belonging to the second circumstance as $\textit{ot}_\textit{NotAppearing}(r)$.

\begin{theorem}
	Given a temporal $q$-sequence database $D$, minimum on-shelf utility $\xi$, and sequence $r$, suppose the structure \textit{CTree} has been constructed and the node $N$ representing $r$ exists in the \textit{CTree}, $r$ has no possibility of being an osHUSP and $N$ can be skipped when $(\textit{CUtility}+\sum_{\forall t \in \textit{ot}_\textit{NotAppearing}(t)}^{}\textit{ptsu}(t) \times \xi)/\sum_{\forall t \in \textit{ot}(r)}^{}\textit{ptsu}(t) < \xi$. 
\end{theorem}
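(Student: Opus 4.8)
The plan is to reduce the claim to the definition of an osHUSP: a sequence $r$ qualifies as an osHUSP exactly when $\textit{our}(r) \geq \xi$, so it suffices to show that the stated inequality forces $\textit{our}(r) < \xi$, after which the node $N$ may be discarded without losing any genuine pattern. First I would partition the on-shelf time periods $\textit{ot}(r)$ into two disjoint groups according to the bookkeeping stored at $N$: the periods in which the periodical utility $\textit{pu}(r,t)$ was actually evaluated during the first phase (these are precisely the nonzero entries of the field \textit{Utility}, whose sum is recorded as \textit{CUtility}), and the periods $\textit{ot}_{\textit{NotAppearing}}(r)$ in which $r$ is on the shelf but was pruned by a local strategy so that its stored utility is $0$. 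By the definition of on-shelf utility this yields the exact decomposition
\begin{align*}
\textit{ou}(r) &= \sum_{t \in \textit{ot}(r)} \textit{pu}(r,t) = \textit{CUtility} + \sum_{t \in \textit{ot}_{\textit{NotAppearing}}(r)} \textit{pu}(r,t).
\end{align*}

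The key step, and the one I expect to require the most care, is bounding the contribution of the periods in $\textit{ot}_{\textit{NotAppearing}}(r)$, since their true periodical utilities are not stored. Here I would invoke the anti-monotone upper bounds established earlier: because each such $t$ is a period in which $r$, or one of its prefixes, was removed by local depth pruning or local width pruning, Theorem~\ref{TPEU} and Theorem~\ref{TRSU} guarantee that $\textit{our}(r,t) = \textit{pu}(r,t)/\textit{ptsu}(t)$ is dominated by the corresponding $\textit{TPEU}$ or $\textit{TRSU}$ ratio, which by the pruning criterion is strictly below $\xi$. Hence $\textit{pu}(r,t) < \xi \cdot \textit{ptsu}(t)$ for every $t \in \textit{ot}_{\textit{NotAppearing}}(r)$. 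The subtle point to get right is that the pruning verdict in period $t$ may have been issued at a prefix $r'$ of $r$ rather than at $r$ itself; the descendant form of both theorems is exactly what licenses transferring the bound down to $r$.

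Substituting this per-period bound into the decomposition gives
\begin{align*}
\textit{ou}(r) &< \textit{CUtility} + \xi \sum_{t \in \textit{ot}_{\textit{NotAppearing}}(r)} \textit{ptsu}(t),
\end{align*}
and dividing by $\sum_{t \in \textit{ot}(r)} \textit{ptsu}(t)$ produces an explicit overestimate of $\textit{our}(r)$ that coincides with the left-hand side of the hypothesised inequality. Assuming that this overestimate is below $\xi$, I would conclude $\textit{our}(r) < \xi$, so $r$ cannot be an osHUSP and the node $N$, together with the entire actual-utility evaluation it would otherwise trigger, may be skipped. The only remaining housekeeping is to confirm that $\textit{ot}(r)$ is genuinely exhausted by the two groups above, so that every on-shelf period is either computed or pruned and no positive utility contribution is silently omitted from $\textit{ou}(r)$.
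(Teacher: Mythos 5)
Your proof follows essentially the same route as the paper's: decompose $\textit{ou}(r)$ into the computed part (\textit{CUtility}) and the pruned on-shelf periods, bound each pruned period's contribution by $\xi \cdot \textit{ptsu}(t)$, and divide by $\sum_{t \in \textit{ot}(r)}\textit{ptsu}(t)$ to conclude $\textit{our}(r) < \xi$. If anything, you are more careful than the paper at the key step, explicitly justifying via the descendant forms of the \textit{TPEU}/\textit{TRSU} theorems why a period pruned at a prefix of $r$ still yields $\textit{pu}(r,t) < \xi \cdot \textit{ptsu}(t)$, where the paper simply asserts that pruned implies not promising.
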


\begin{proof}
	Clearly, after the first mining process, the on-shelf utility of $r$ can be divided into two parts. The first part is the sum over periodical utility values of $r$ within the time periods, denoted as $\textit{ot}_\textit{Appearing}(t)$, when the periodical utility values of $r$ have been calculated. The periodical utility values in this part are stored in the array \textit{Utility}, and the integer \textit{CUtility} is the summation over all values. The second part is the sum of the periodical utility values of $r$ within the time periods, denoted as $\textit{ot}_\textit{NotAppearing}(t)$, when $r$ has been pruned and its periodical utility value has not been calculated. The reason why $r$ has been pruned is that $r$ is not a promising osHUSP, that is, $\textit{pu}(r,t) < \textit{ptsu}(t) \times \xi$. Then, we obtain the following.

	\begin{align*}
	\textit{ou}(r)
	&= \sum_{\forall t \in \textit{ot}_\textit{Appearing}(t)}^{}\textit{pu}(r,t) + \sum_{\forall t \in \textit{ot}_\textit{NotAppearing}(t)}^{}\textit{pu}(r,t) \\
	&= \textit{CUtility} + \sum_{\forall t \in \textit{ot}_\textit{NotAppearing}(t)}^{}\textit{pu}(r,t) \\
	&\leq \textit{CUtility} + \sum_{\forall t \in \textit{ot}_\textit{NotAppearing}(t)}^{}\textit{ptsu}(t) \times \xi
	\end{align*}
	Then, the following equation can be obtained.
	\begin{align*}
	\textit{our}(r) &= \textit{ou}(r)/\sum_{\forall t \in \textit{ot}(r)}^{}\textit{ptsu}\\
	&\leq (\textit{CUtility} + \sum_{\forall t \in \textit{ot}_\textit{NotAppearing}(t)}^{}\textit{ptsu}(t) \times \xi)/\sum_{\forall t \in \textit{ot}(r)}^{}\textit{ptsu}
	\end{align*}
	
\end{proof}

Given a sequence $r$ in the \textit{CTree}, there is no requirement to calculate the actual on-shelf utility of $r$, and the node $N$ can be skipped when $(\textit{CUtility}+\sum_{\forall t \in \textit{ot}_\textit{NotAppearing}(t)}^{}\textit{ptsu}(t) \times \xi)/\sum_{\forall t \in \textit{ot}(r)}^{}\textit{ptsu}(t) < \xi$, as $r$ must not be an osHUSP. Note that the set $\textit{ot}_\textit{NotAppearing}(r)$ can be easily obtained by an AND operation over the two bit sets \textit{On-shelfTime} and \textit{CalculatedTime}. The ARC strategy can significantly reduce the high computational cost as it is time consuming to calculate the periodical utility of $r$ in a time period when starting from the beginning.

\subsubsection{Overview of OSUMS}

Based on the storage data structures and the three designed strategies, the proposed OSUMS algorithm can be stated as follows. To facilitate the statement, we present the main pseudocode of the OSUMS Algorithms \ref{alg:OSUMS1} and \ref{alg:recursive1}. 

\IncMargin{1em}
\begin{algorithm}
	\caption{OSUMS Algorithm}
	\label{alg:OSUMS1}
	\KwIn{
		$D$: a temporal $q$-sequence database;\\ 
		\textit{UT}: a utility table with external utility values of items;\\
		\textit{OT}: a table containing on-shelf information of items;\\
		$\xi$: a minimum on-shelf utility threshold.}
	\KwOut{
		\textit{osHUSP}: the complete set of osHUSPs;}
	 \BlankLine
		build a new \textit{CTree}\; 
			scan the original database $D$ to:
			(i) calculate the periodical total sequence utility (\textit{ptsu}) value of each time period;
			(ii) build the periodical $q$-matrix of each $q$-sequence in $D$\;

		\For{each time period $t \in T$}{
			scan the $q$-sequences in $\textit{TD}_t$ to: 
			(i) calculate the periodical utility value and \textit{TPEU} value of each 1-squence;
			(ii) construct the projected database of each 1-sequence in $\textit{TD}_t$\;
			\For{$r \in $ 1-sequences }{
				\If{$r$ does not exist in $CTree$}{
					construct the node representing $r$\;
				}
	
				update information of the node representing $r$ and its parent\;
				\tcp{The LDP strategy}
				\If{$\textit{TPEU}(r,t)/\textit{tpsu}(t) \geq \xi$}{
					call \textit{Local\_PGrowth($r,r.\textit{DB}_t,t$)}\;
				}
			}
		}
		\For{each node $N$ representing $r$ in \textit{CTree}}{
			calculate the periodical utilities of $r$ within the time periods when $r$ is on-shelf but has not been calculated\;
			\tcp{The ARC strategy}
			\If{$(\textit{CUtility} + \sum_{\forall t \in \textit{ot}_\textit{NotAppearing}(t)}^{}\textit{ptsu}(t) \times \xi)/\sum_{\forall t \in \textit{ot}(r)}^{}\textit{ptsu}(t) < \xi$}{
			calculate the on-shelf utility of the sequence $r$ $\textit{ou}(r)$ and the on-shelf utility ratio $\textit{our}(r)$\;
				\If{$\textit{our}(r) \ge \xi$}{
					$\textit{osHUSP} \leftarrow \textit{osHUSP} \cup r$\;
				}
			}
		}
	\Return{\textit{osHUSP}}
\end{algorithm}\DecMargin{1em}

The OSUMS method takes a temporal $q$-sequence database $D$, utility table \textit{UT}, table with on-shelf information \textit{OT}, and minimum on-shelf utility threshold $\xi$ as the inputs. The key idea of OSUMS is to enumerate the promising osHUSPs independently in each time period in the first mining process (Lines 3--14), and then identify whether they have high on-shelf utility in the second mining process (Lines 15--23). Initially, OSUMS constructs a new \textit{CTree} for storing calculation results (Lines 1). Then, OSUMS scans the original database for obtaining \textit{ptsu} values and periodical $q$-matrices (Line 2). For each time period $t$ in $T$ (Lines 3--14), OSUMS scans the $\textit{TD}_t$ to calculate essential values and construct projected databases of all 1-sequences in $t$ (Line 4). Then, \textit{CTree} will be updated according to each 1-sequence (Lines 6--9). It is to be noted that if the node representing the candidate does not appear in the tree, the node would be added as a child of the root at first (Lines 6--8). Subsequently, OSUMS adopts the LDP pruning strategy to judge whether descendants of the current 1-sequence have no possibility to be a promising osHUSP (Lines 10--12). If so, it backtracks to the root in the \textit{LS-Tree}. Otherwise, it calls the Local\_PGrowth procedure for mining longer promising osHUSPs by recursively enumerating candidate with the prefix of the current 1-sequence. In the second mining process, OSUMS traverses the \textit{CTree} to identify osHUSPs. For each node in the tree structure, OSUMS uses the AND operation to obtain the time periods when the sequence is on the shelf, but its periodical utility has not been calculated (Line 16). Then, OSUMS adopts the ARC strategy to judge whether the actual on-shelf utility must be calculated (Line 18). If $N$ is not skipped, OSUMS calculates unknown periodical utility values in $\textit{ot}_\textit{NotAppearing}$ to verify whether the sequence has on-shelf high utility (Line 19). If so, it adds the sequence to the set \textit{osHUSP} (Line 20). Finally, the final complete set of osHUSPs is discovered by the designed OSUMS algorithm (Line 24).

\IncMargin{1em}
\begin{algorithm}[htbp]
	\caption{Local\_PGrowth}
	\label{alg:recursive1}
		\KwIn{
		$r$: a sequence as prefix; \\
		$r.\textit{DB}_t$: the projected database of $t$; \\
		$t$: the current time period.}
		\BlankLine
		\For{each utility list \textit{ul} in $r.\textit{DB}_t$}{
			obtain the periodical $q$-matrix \textit{pm} of \textit{ul}\;
			scan \textit{pm} to obtain the set of $I$-Extension items \textit{ilist}\;
			scan \textit{pm} to obtain the set of $S$-Extension items \textit{slist}\;
		}
		
		\For{each item $i \in \textit{ilist}$ }{
			$r'$ $\leftarrow$ $<$$r \oplus i$$>$\; 
			\tcp{The LWP strategy}
			\If{$\textit{TRSU}(r',t)/\textit{ptsu}(t)<\xi$}{
				continue\;
			}
			construct projected database of $r'$ within the time period $t$ $r'.\textit{DB}_t$\; 
			put $r'$ into \textit{seqlist}\;
		}
		
		\For {each item $i \in \textit{slist}$ }{
			$r'$ $\leftarrow$ $<$$t \otimes s$$>$\; 
			\tcp{The LWP strategy}
			\If{$\textit{TRSU}(r',t)/\textit{ptsu}(t)<\xi$}{
				continue\;
			}
			construct projected database of $r'$ within the time period $t$ (i.e. $r'.\textit{DB}_t$)\; 
			put $r'$ into \textit{seqlist}\;
		}
		
		\For {each sequence $\textit{r} \in \textit{seqlist}$ }{
			\If{$r$ does not exist in $CTree$}{
				construct the node representing $r$\;
			}
			update information of the node representing $r$ and its parent\;
			\tcp{The LDP strategy}
			\If{$\textit{TPEU}(r,t)/\textit{tpsu}(t) \geq \xi$}{
				call \textit{Local\_PGrowth($r,r.\textit{DB}_t,t$)}\;
			}
		}	
\end{algorithm}\DecMargin{1em}

The Local\_PGrowth procedure presented in Algorithm \ref{alg:recursive1} takes a sequence $r$, the current time period $t$, and its projected database in $t$ as the inputs. Sequences with the prefix of $r$ are enumerated by applying the $I$-Extension and $S$-Extension operations. The method first scans the periodical $q$-matrix to obtain the items to be extended (Lines 1--5). For each item $i$ in \textit{ilist}, $r'$ is generated from $r$ by an $I$-Extension operation with $i$ (Line 7). Note that the \textit{TRSU} value of $r'$ in $t$ is simultaneously obtained from the scan (Lines 3--4). According to the LWP pruning strategy, OSUMS then determines whether the node representing $r$ and its descendants should be pruned (Lines 8--10). If the node has not been pruned, $r$ is saved and its projected database in $t$ is constructed (Lines 11--12). Each item in \textit{slist} can be processed in a similar manner (Lines 14--21). After extension operations, OSUMS updates the \textit{CTree} according to the calculation results of the newly generated sequences (Lines 23--26). Finally, the Local\_PGrowth procedure is recursively called for mining longer promising osHUSPs with the prefix $r$ in time period $t$ with the LDP strategy (Lines 27--29).

\subsection{OSUMS$^{+}$ Approach}

The designed pruning strategies guarantee that only on-shelf low-utility sequences are pruned in \textit{PLS-Forest}. Thus, the OSUMS method is able to discover the complete set of osHUSPs by reducing the search space to improve their performance. However, it must retain all candidates in memory in the form of the \textit{CTree} before performing the second phase of operations. This huge tree structure wastes large amounts of memory, especially when too many candidates are generated in the first mining process. Consequently, we designed a one-phase algorithm named OSUMS$^{+}$ where each candidate can be immediately identified by determining if it has on-shelf high utility once it is enumerated as a promising osHUSP. The search space of OSUMS can be represented by an \textit{LS-Tree}, as the method handles all time periods simultaneously. This subsection presents the details about the OSUMS$^{+}$ approach.

\subsubsection{Storage Data Structure}

Similar to OSUMS, the OSUMS$^{+}$ method also uses the periodical $q$-matrix to store each $q$-sequence. It facilitated the process of identifying the items to be extended and the construction of projected databases of 1-sequences. Moreover, a novel data structure with periodical utility chain is designed based on the concept of the utility chain \cite{wang2016efficiently} to represent the projected database. A periodical utility chain of a sequence is the union of utility chains in each of its on-shelf time periods. To distinguish between the utility chains generated from different time periods, we incorporated a new field \textit{Time}, which is the identifier of time periods, into the head table. As an example, Figure \ref{pdb} shows the projected database of the sequence $r$ = $<$\{\textit{a}\}, \{\textit{c}\}$>$ in the OSUMS $^{+}$ approach. We know $r$ is on the shelf within the time periods $t_2$ and $t_3$, and the utility chain within the time period $t_2$ has been illustrated in Figure \ref{db}. Thus, \textit{Time} in the head table is equal to $t_2$ in the first two utility lists. Following a recursive process, the projected databases of $l$+1-sequences can be built according to those of $l$-sequences.

\begin{figure}[htbp]
	\centering
	\includegraphics[width=0.6\linewidth]{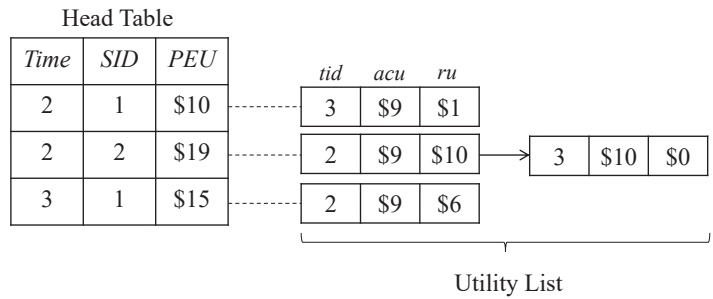}
	\caption{Projected database of $r$ = $<$\{\textit{a}\}, \{\textit{c}\}$>$ in OSUMS$^{+}$}
	\label{pdb}
\end{figure}

\subsubsection{Pruning Strategies}
OSUMS$^{+}$ adopts the \textit{PEU} and \textit{RSU} upper bounds, as well as two global pruning strategies. We defined the calculation methods of the \textit{TPEU} and \textit{TRSU} upper bounds in a time period in Section \ref{osums}. Then, we set the upper bounds for the entire on-shelf time as presented below. Given a sequence $r$ and temporal $q$-sequence database $D$, we have $\textit{TPEU}(r)$ = $\sum_{\forall t \in \textit{ot(r)}}^{}\textit{TPEU}(r,t)$ and $\textit{TRSU}(r)$ = $\sum_{\forall t \in \textit{ot(r)}}^{}\textit{TRSU}(r,t)$.

\begin{theorem}
	Given a temporal $q$-sequence database $D$ and two sequences $r$ and $r'$, suppose the node representing $r$ is a descendant of the node representing $r'$ (i.e., $r'$ is a prefix sequence of $r$), within a time period $t$ in $D$, we obtain $$\textit{our}(r) \leq \textit{TPEU}(r')/\sum_{\forall t \in \textit{ot}(r')}^{}\textit{ptsu}(t).$$
\end{theorem}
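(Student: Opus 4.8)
The plan is to reduce this global statement to the per-period bound already obtained inside the proof of Theorem~\ref{TPEU}, namely $\textit{pu}(r,t) \leq \textit{TPEU}(r',t)$ for every time period $t$, and then to aggregate over the on-shelf periods while carefully matching the two different index sets that appear in the numerator and in the denominator. First I would record the structural fact that, since $r'$ is a prefix of $r$, the item set of $r'$ is contained in that of $r$; because $\textit{ot}(\cdot)$ is defined as the union of the on-shelf periods of the contained items, this yields the inclusion $\textit{ot}(r') \subseteq \textit{ot}(r)$.

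The key preliminary observation is that the on-shelf utility of $r$ in fact only accumulates over $\textit{ot}(r')$. For any period $t \in \textit{ot}(r) \setminus \textit{ot}(r')$, no item of $r'$ is on the shelf during $t$, so none of them occurs in any $q$-sequence of $\textit{TD}_t$; since every $q$-sequence containing $r$ must contain all items of $r'$, it follows that $r \not\subseteq s$ for all $s \in \textit{TD}_t$, and hence $\textit{pu}(r,t)=0$. Consequently $\textit{ou}(r) = \sum_{t \in \textit{ot}(r)} \textit{pu}(r,t) = \sum_{t \in \textit{ot}(r')} \textit{pu}(r,t)$, so the numerator is genuinely supported on $\textit{ot}(r')$.

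With this in hand I would apply the per-period inequality $\textit{pu}(r,t) \leq \textit{TPEU}(r',t)$ term by term and sum over $t \in \textit{ot}(r')$, using the definition $\textit{TPEU}(r') = \sum_{t \in \textit{ot}(r')} \textit{TPEU}(r',t)$ to conclude $\textit{ou}(r) \leq \textit{TPEU}(r')$. It then remains to pass to the ratios. Because every $\textit{ptsu}(t)$ is nonnegative and $\textit{ot}(r') \subseteq \textit{ot}(r)$, the left-hand denominator satisfies $\sum_{t \in \textit{ot}(r)} \textit{ptsu}(t) \geq \sum_{t \in \textit{ot}(r')} \textit{ptsu}(t)$, so dividing the numerator bound by the smaller denominator only enlarges the fraction, giving $\textit{our}(r) = \textit{ou}(r)/\sum_{t \in \textit{ot}(r)} \textit{ptsu}(t) \leq \textit{ou}(r)/\sum_{t \in \textit{ot}(r')} \textit{ptsu}(t) \leq \textit{TPEU}(r')/\sum_{t \in \textit{ot}(r')} \textit{ptsu}(t)$, which is exactly the claim.

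I expect the main obstacle to be the bookkeeping of the two mismatched index sets: the numerator naturally lives over $\textit{ot}(r)$ while the target bound is normalized over $\textit{ot}(r')$, and one must argue that enlarging the denominator is safe rather than harmful. The crux is the vanishing claim $\textit{pu}(r,t)=0$ on $\textit{ot}(r)\setminus\textit{ot}(r')$, which rests on the semantic assumption that off-shelf items never appear in a period's $q$-sequences; I would make that dependence explicit so that the two ratios are legitimately comparable and the chain of inequalities above is justified at each step.
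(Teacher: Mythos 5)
You take essentially the same route as the paper: invoke the per-period bound $\textit{pu}(r,t) \leq \textit{TPEU}(r',t)$ from Theorem~\ref{TPEU}, sum over periods, and divide by the $\textit{ptsu}$ total. The one point where you go beyond the paper is the switch of index set from $\textit{ot}(r)$ to $\textit{ot}(r')$, which the paper performs silently and you justify via the vanishing claim $\textit{pu}(r,t)=0$ on $\textit{ot}(r)\setminus\textit{ot}(r')$; note, however, that the semantic assumption this rests on (off-shelf items never occur in a period's $q$-sequences) is contradicted by the paper's own running example --- item \textit{b} appears in $\textit{QS}_{2,2}$ during $t_2$ although Table~\ref{table3} marks it off-shelf there --- so that step is an unclosed gap shared by your argument and the paper's rather than one you introduced.
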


\begin{proof}
	Given a time period $t$ in $D$, we have $\textit{pu}(r,t) \leq \textit{TPEU}(r',t)$ according to Theorem \ref{TPEU}. Then, the following can be obtained 
	\begin{align*}
	\textit{ou}(r) &= \sum_{\forall t \in \textit{ot(r)}}^{}\textit{pu}(r,t) \\
	&\leq \sum_{\forall t \in \textit{ot(r')}}^{}\textit{TPEU}(r',t) \\
	& = \textit{TPEU}(r') 
	\end{align*}
	Subsequently, the following is obtained.
	\begin{align*}
	\textit{our}(r) &= \textit{ou}(r)/\sum_{\forall t \in \textit{ot}(r)}^{}\textit{ptsu}(t)\\
	&\leq \textit{TPEU}(r')/\sum_{\forall t \in \textit{ot}(r)}^{}\textit{ptsu}(t)
	\end{align*}
\end{proof}

Clearly, the \textit{TPEU} upper bound holds the download closure property of the entire time period. Then, we developed a depth pruning strategy termed global depth pruning (GDP), which is as described below. Given a sequence $r$ represented by a node $N$ in the \textit{LS-Tree}, minimum on-shelf utility threshold $\xi$, and temporal $q$-sequence database $D$, the descendants of $N$ can be safely pruned without affecting the mining results when $\textit{TPEU(r)}/ \sum_{\forall t \in \textit{ot}(r)}^{}\textit{ptsu}(t) < \xi$.

\begin{theorem}
	Given a temporal $q$-sequence database $D$ and two sequences $r$ and $r'$, suppose the node representing $r$ is the descendant of the node representing $r'$ or $r$ = $r'$, we obtain $$\textit{ou}(r) \leq \textit{TRSU}(r',t).$$
\end{theorem}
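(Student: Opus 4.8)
The plan is to mirror the structure of the proof of the earlier \textit{TRSU} theorem (Theorem \ref{TRSU}), but to sum over the entire on-shelf time instead of a single time period. First I would observe that the hypothesis is exactly the one under which Theorem \ref{TRSU} applies: since the node representing $r$ is a descendant of the node representing $r'$ (or $r = r'$), the earlier result gives us, for each individual time period $t$, the periodwise inequality $\textit{pu}(r,t) \leq \textit{TRSU}(r',t)$. This is the workhorse bound, and I would invoke it as an established fact rather than re-derive the underlying $u(r,s) \leq \textit{RSU}(r',s)$ argument, which has already been carried out in Proof \ref{proof_TRSU}.

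Next I would recall the definitions that let me assemble the periodwise bounds into a global one. By definition $\textit{ou}(r) = \sum_{t \in \textit{ot}(r)} \textit{pu}(r,t)$, and the aggregate upper bound is $\textit{TRSU}(r') = \sum_{t \in \textit{ot}(r')} \textit{TRSU}(r',t)$. The core computation is then a short chain:
\begin{align*}
\textit{ou}(r) &= \sum_{\forall t \in \textit{ot}(r)}^{} \textit{pu}(r,t) \\
&\leq \sum_{\forall t \in \textit{ot}(r')}^{} \textit{TRSU}(r',t) \\
&= \textit{TRSU}(r').
\end{align*}
This is the same two-line assembly used in the analogous \textit{TPEU} global theorem just above, so it should go through with no new ideas.

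The one step that deserves care is the transition from $\sum_{t \in \textit{ot}(r)}$ to $\sum_{t \in \textit{ot}(r')}$ in the first inequality, because the index sets differ. Since $r$ is an extension of $r'$, every item of $r'$ occurs in $r$, so $\textit{ot}(r') \subseteq \textit{ot}(r)$; the set of on-shelf periods can only grow as we add items. I would need to argue that dropping from $\textit{ot}(r)$ to $\textit{ot}(r')$ does not lose any positive contribution on the right-hand side: for any period $t \in \textit{ot}(r) \setminus \textit{ot}(r')$, the $q$-sequences in that period cannot contain $r'$ (the new items forcing $r$ on-shelf are absent there), so $\textit{RSU}(r',s) = 0$ throughout and the corresponding $\textit{pu}(r,t)$ term is itself zero. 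Hence the periods in $\textit{ot}(r) \setminus \textit{ot}(r')$ contribute nothing and the index-set swap is legitimate. I expect this bookkeeping about the on-shelf index sets — not the utility bound itself — to be the main obstacle, since the statement as printed (the bound reads $\textit{ou}(r) \leq \textit{TRSU}(r',t)$ with a stray $t$ and an $\textit{ou}$ where the global versions elsewhere carry over to an $\textit{our}$ ratio) is slightly informal, and I would clean up the indices to state and prove the honest aggregate inequality $\textit{ou}(r) \leq \textit{TRSU}(r')$.
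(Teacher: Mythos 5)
Your proof follows the paper's argument exactly: invoke the periodwise bound $\textit{pu}(r,t) \leq \textit{TRSU}(r',t)$ from Theorem \ref{TRSU}, sum over time periods to obtain $\textit{ou}(r) \leq \sum_{t \in \textit{ot}(r')}\textit{TRSU}(r',t) = \textit{TRSU}(r')$, and read the stray $t$ in the stated bound as the obvious typo for the aggregate $\textit{TRSU}(r')$, just as the paper's own proof does before passing to the $\textit{our}$ ratio. If anything you are more careful than the paper, which performs the index-set swap from $\textit{ot}(r)$ to $\textit{ot}(r')$ silently, whereas you explicitly argue the periods in $\textit{ot}(r)\setminus\textit{ot}(r')$ contribute nothing.
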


\begin{proof}
	Given a time period $t$ in $D$, we have $\textit{pu}(r,t) \leq \textit{TRSU}(r',t)$ according to Theorem \ref{TRSU}. Then, the following is obtained. 
	\begin{align*}
	\textit{ou}(r) &= \sum_{\forall t \in \textit{ot(r)}}^{}\textit{pu}(r,t) \\
	&\leq \sum_{\forall t \in \textit{ot(r')}}^{}\textit{TRSU}(r',t) \\
	& = \textit{TRSU}(r') 
	\end{align*} 
	Then,
	\begin{align*}
	\textit{our}(r) &= \textit{ou}(r)/\sum_{\forall t \in \textit{ot}(r)}^{}\textit{ptsu}(t)\\
	&\leq \textit{TRSU}(r')/\sum_{\forall t \in \textit{ot}(r)}^{}\textit{ptsu}(t). 
	\end{align*} 
\end{proof}

As proven above, the \textit{TRSU} upper bound demonstrates anti-monotonicity in the entire time periods in a temporal $q$-sequence database. Then, we designed a width pruning strategy termed global width pruning (GWP), which is described below. Given a sequence $r$ that is represented by a node $N$ in \textit{LS-Tree}, minimum on-shelf utility threshold $\xi$, and temporal $q$-sequence database $D$, $N$ and its descendants can be safely pruned when $\textit{TRSU(t)}/ \sum_{\forall t \in \textit{ot}(r)}^{}\textit{ptsu}(t) < \xi$. 

When compared to the two local pruning strategies in OSUMS, the two global strategies are able to affect all time periods. Several branches in the tree structure \textit{LS-Tree} can be efficiently and safely pruned. 

\subsubsection{Overview of OSUMS$^{+}$}

To facilitate the statement of the proposed OSUMS$^{+}$ method, we present two pieces of the pseudocode in Algorithms \ref{alg:OSUMS2} and \ref{alg:recursive2}.

\IncMargin{1em}
\begin{algorithm}[htbp]
	\caption{OSUMS$^{+}$ Algorithm}
	\label{alg:OSUMS2}
	\KwIn{
		$D$: a temporal $q$-sequence database;\\ 
		\textit{UT}: a utility table with external utility values of items;\\
		\textit{OT}: a table containing on-shelf information of items;\\
		$\xi$: a minimum on-shelf utility threshold.}
	\KwOut{
		\textit{osHUSP}: the complete set of osHUSPs;}
		\BlankLine
		first scan the original database $D$ to: 
		(i) calculate the periodical total sequence utility (\textit{ptsu}) value of each time period;
		(ii) build the periodical $q$-matrix of each $q$-sequence in $D$\;
		second scan the original database $D$ to: 
		(i) calculate the on-shelf utility and \textit{PEU} value of each 1-sequence; 
		(ii) construct the projected database of each 1-sequence\;
		\For {$r \in $ 1-sequences}{
			calculate the on-shelf utility ratio of $r$\;
			\If{$\textit{our}(r) \ge \xi$}{
				$\textit{osHUSP} \leftarrow \textit{osHUSP} \cup r$\;
			}
			\tcp{The GDP strategy}
			\If{$\textit{TPEU}(r)/\sum_{\forall t \in\textit{ot}(r)}\textit{ptsu}(t) \geq \xi$}{
				call \textit{Global\_PGrowth($r,r.\textit{DB}_t$)}\;
			}
		}
		
		\Return{\textit{osHUSP}}	
\end{algorithm}\DecMargin{1em}

As it can be observed in Algorithm \ref{alg:OSUMS2}, the input and output of OSUMS$^{+}$ are identical to those of OSUMS. In addition, both methods scan the original database at first for calculating \textit{ptsu} values and constructing the periodical $q$-matrices (Line 1). Then, OSUMS$^{+}$ performs a database scan again to obtain related information of all 1-sequences (Line 2). For each 1-sequence $r$, OSUMS$^{+}$ first calculates its on-shelf utility ratio to check whether $r$ is an osHUSP (Lines 4--7). Adopting the GDP strategy, the method can prune the descendants of $r$'s if the \textit{TPEU} of $r$ does not satisfy the given condition (Line 8). Otherwise, the function Global\_PGrowth is called to check longer sequences with the prefix of $r$ (Line 9). After the recursive process, OSUMS$^{+}$ finally returns the complete set of osHUSPs (Line 12).

\IncMargin{1em}
\begin{algorithm}[htbp]
	\caption{Global\_PGrowth}
	\label{alg:recursive2}
	\KwIn{
		$r$: a sequence as prefix; \\
		$r.\textit{DB}$: the projected database of $t$; \\}
		\BlankLine
		\For {each utility list \textit{ul} in $r.\textit{DB}$}{
			obtain the periodical $q$-matrix \textit{pm} of \textit{ul}\;
			scan \textit{pm} to obtain the set of $I$-Extension items \textit{ilist}\;
			scan \textit{pm} to obtain the set of $S$-Extension items \textit{slist}\;
		}
		
		\For {each item $i \in \textit{ilist}$ }{
			$r'$ $\leftarrow$ $<$$r \oplus i$$>$\; 
			\tcp{The GWP strategy}
			\If{$\textit{TRSU}(r')/\sum_{t \in\textit{ot}(r')}\textit{ptsu}(t) <\xi$}{
				continue\;
			}
			construct projected database of $r'$ within the time period $t$ $r'.\textit{DB}$\; 
			put $r'$ into \textit{seqlist}\;
		}
		
		\For {each item $i \in \textit{slist}$ }{
			$r'$ $\leftarrow$ $<$$t \otimes s$$>$\; 
			\tcp{The GWP strategy}
			\If{$\textit{TRSU}(r')/\sum_{\forall t \in\textit{ot}(r')}\textit{ptsu}(t) <\xi$}{
				continue\;
			}
			construct projected database of $r'$ within the time period $t$ $r'.\textit{DB}$\; 
			put $r'$ into \textit{seqlist}\;
		}
		\For {each sequence $\textit{r} \in \textit{seqlist}$ }{
			\If{$\textit{our}(r) \ge \xi$}{
				$\textit{osHUSP} \leftarrow \textit{osHUSP} \cup r$\;
			}
			\tcp{The GDP strategy}
			\If{$\textit{TPEU}(r)/\sum_{\forall t \in\textit{ot}(r)}\textit{ptsu}(t) \geq \xi$}{
				call \textit{Global\_PGrowth($r,r.\textit{DB}$)}\;
			}
		}	
\end{algorithm}\DecMargin{1em}

The Global\_PGrowth function only takes $r$ and its projected database $r.\textit{DB}$ as inputs without the time periods, as OSUMS$^{+}$ handles all time periods simultaneously. Initially, OSUMS$^{+}$ scans each utility list in $r.\textit{DB}$ to obtain items that can be extended to $r$. The processing operation of each item in OSUMS$^{+}$ bears a resemblance to that in OSUMS. The difference is that, OSUMS$^{+}$ utilizes the GDP pruning strategy to determine whether the generated sequence and its descendants must be pruned according to the $\textit{TRSU}(r')/\sum_{\forall t \in\textit{ot}(r')}\textit{ptsu}(t)$ value (Line 9/18). For each generated sequence $r'$ that has not been pruned, OSUMS$^{+}$ first checks whether it is an osHUSP (Lines 25--27). Then, if the \textit{TPEU} value of $r$ satisfies the condition, the Global\_PGrowth function is called recursively for mining patterns with the prefix of $r$ (Lines 29--31). Note that the \textit{TPEU} value of $r$ is calculated simultaneously during the process of building its projected database (Line 12/21).

\section{Experiments} 

This section presents sufficient experimental results to demonstrate the performance of the two proposed algorithms. The only existing algorithm TP-HOUS \cite{lan2014discovery}, which aims to mine the top-$k$ on-shelf high-utility sequences, is also implemented for the purpose of comparison. All experiments were executed on a personal computer with a 3.8 GHz Intel Core i7-10700K CPU, 32 GB RAM, and 64-bit Windows 10 operating system. The three compared algorithms were all implemented in Java JDK 1.8. Note that the mining process was forced to be terminated once its execution time exceeded 10,000 s in all experiments. Moreover, we also specified the upper limit for memory usage as 12 GB (i.e., 12,288 MB) in the IntelliJ IDEA compiler.

\subsection{Data Description}

To evaluate the performance of the algorithm, six real-world datasets were utilized in our experiment. The detailed characteristics of these datasets are listed in Table \ref{features}. Note that $|D|$ is the number of $q$-sequences, $|I|$ is the number of different $q$-items, $\textit{avg}(S)$/$\textit{max}(S)$ is the average/maximum length of $q$-sequences, \textit{\#avgItemSets} is the average number of $q$-itemsets per $q$-sequence, \textit{\#Ele} is the average number of $q$-items per $q$-itemset, \textit{avg}(\textit{UI})/\textit{max}(\textit{UI}) is the average/maximum utility of $q$-items. The first three datasets \textit{MSNBC}, \textit{Kosarak10k}, and \textit{Yoochoose} were converted from a series of click streams, which can be obtained from the MSNBC website, a Hungarian online news portal, and an ecommerce website, respectively. In addition, the linguistic datasets converted from the Bible, a famous novel Leviathan, and the sign language expression of video clips are \textit{Bible}, \textit{Leviathan}, and \textit{Sign}, respectively. In the linguistic dataset, each word can be converted into a numeric item. These six datasets represent most of the data types with various characteristics that are commonly encountered in real-world scenarios, and the use of both sequence- and item-based datasets results in a more convincing experiment. Excluding \textit{Yoochoose}\footnote{https://recsys.acm.org/recsys15/challenge/}, the other five datasets can be obtained from an open-source website\footnote{http://www.philippe-fournier-viger.com/spmf/}.

\begin{table*}[!htbp]
	\caption{Features of the datasets}
	\label{features}
	\centering      
	\begin{tabular}{|c|c|c|c|c|c|c|c|c|}
		\hline
		\textbf{Dataset} & \textbf{$|\textit{D}|$} & \textbf{$|\textit{I}|$} & \textbf{$\textit{avg}(\textit{S})$} & \textbf{$\textit{max}(\textit{S})$} & \textbf{\textit{avg}(\textit{IS})} & \textbf{\textit{\#Ele}} & \textbf{\textit{avg}(\textit{UI})} & \textbf{\textit{max}(\textit{UI})} \\ \hline 
		\textit{MSNBC} & 31,790 & 17 & 13.33 & 100 & 13.33 & 1.00 & 6.07 & 50\\ \hline     
		\textit{Kosarak10k} & 10,000 & 10,094 & 8.14 & 608 & 8.14 & 1.00 & 17.15 & 120\\ \hline
		\textit{Yoochoose} & 234,300 & 16,004 & 2.25 & 112 & 1.14 & 1.98 & 3,655 & 523,390\\ \hline
		\textit{Bible} & 36,369 & 13,905 & 21.64 & 100 & 21.64 & 1.00 & 16.29 & 140\\ \hline
		\textit{Leviathan} & 5,834 & 9,025 & 33.81 & 100 & 33.81 & 1.00 & 6.08 & 50\\ \hline
		\textit{Sign} & 730 & 267 & 52.00 & 94 & 52.00 & 1.00 & 16.71 & 110\\ \hline
	\end{tabular}
\end{table*}

\subsection{Efficiency Analysis}

In this section, we present the comparison of the performance of the two proposed algorithms in terms of runtime and memory consumption, which are common measures to evaluate the efficiency of algorithms. We conducted a series of experiments on the six datasets under various minimum on-shelf utility threshold settings. Note that the placeholder indicates that the algorithm runs so inefficiently that it satisfies the aforementioned limitation under the corresponding conditions. The details of the results are shown in Figures \ref{runtime} and \ref{memory}. Clearly, OSUMS cannot extract the desired osHUSPs in certain cases, while the OSUMS$^{+}$ worked well at all times. In general, the runtime and memory usage of methods decreases with the increase in the minimum on-shelf utility threshold, as more and more nodes in \textit{PLS-Forest} cannot be pruned by the pruning strategies. Thus, methods require more runtime to check more candidates and more memory for storing projected databases. As it can be observed, Figures \ref{runtime} and \ref{memory} convincingly verify this theory. However, the memory usage in the dataset \textit{Kosarak10k} is relatively unstable and increases and decreases sometimes. We can determine that the one-phase algorithm OSUMS$^{+}$ executes faster than OSUMS except in the case of the \textit{Yoochoose} dataset. OSUMS is more skilled in handling the database with short $q$-sequences. It is interesting to observe that the measure runtime follows an exponential decline in the six datasets. For example, the execution time of OSUMS sharply decreases when $\xi$ is less than 0.21\% in \textit{Yoochoose}. It is worth mentioning that the baseline TP-HOUS is incapable of mining the complete set of osHUSPs within a reasonable running time. This implies that our proposed pruning strategies and storage data structures improve the performance of methods crucially. In summary, OSUMS may consume a large amount of memory and is not suitable for the cases with limited memory while OSUMS$^{+}$ demonstrates wider real-life applications owing to its high efficiency.

\begin{figure*}[htbp]
	\centering
	\includegraphics[width=1\linewidth]{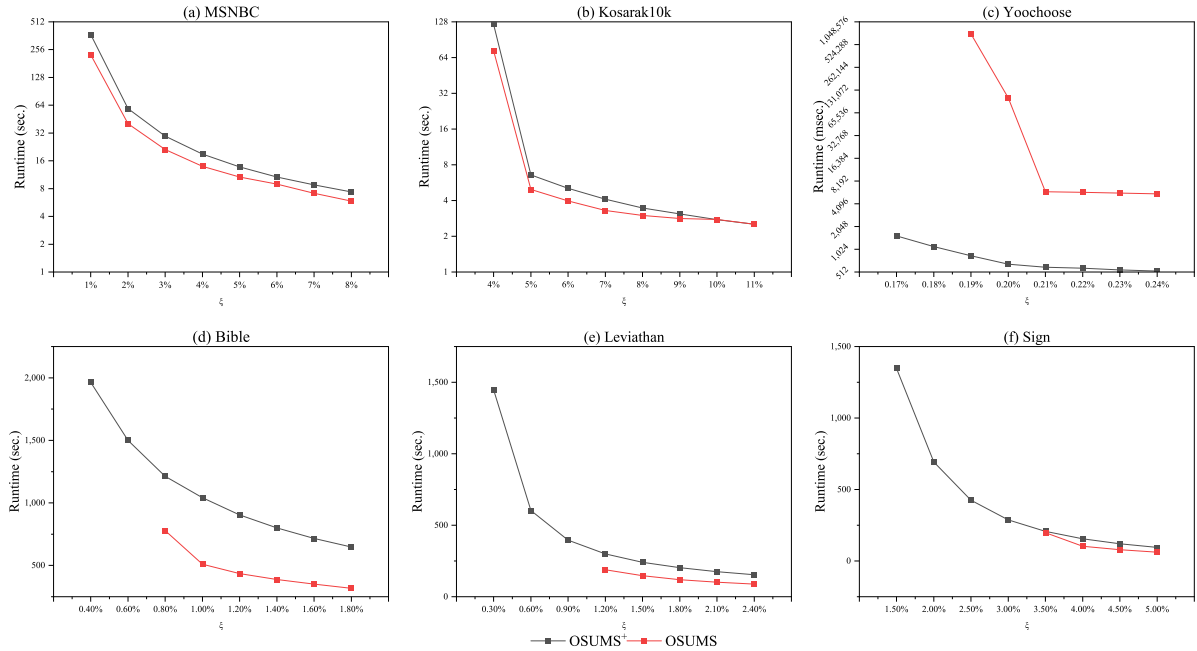}
	\caption{Runtime of the compared methods under various minimum on-shelf utility thresholds}
	\label{runtime}
\end{figure*}

\begin{figure*}[htbp]
	\centering
	\includegraphics[width=1\linewidth]{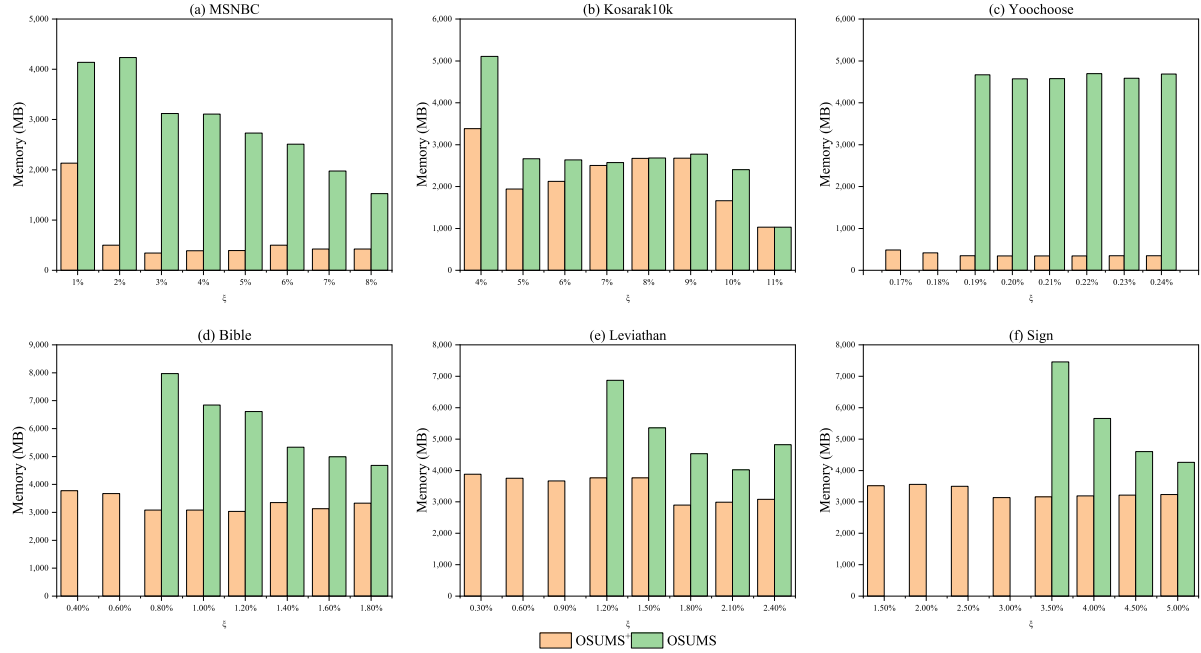}
	\caption{Memory usage of the compared approaches under varying minimum on-shelf utility thresholds}
	\label{memory}
\end{figure*}

\subsection{Candidate and Pattern Analysis}

This subsection analyzes candidates and patterns of the two proposed algorithms under varying minimum utility thresholds in the six datasets. Note that \textit{\#candidate} is the number of candidates generated, which must be checked, and \textit{\#patterns} is the number of the osHUSPs extracted by the method. As shown in Figure \ref{candidate}, with the increase in the minimum on-shelf utility threshold, the numbers of generated candidates for both algorithms follow a declining trend. In general, the descent is smooth; however, \textit{\#candidate} sharply decreases as $\xi$ crosses 4\% in the dataset \textit{Kosarak10k}. The same phenomena occurs in the mining process of OSUMS in \textit{Yoochoose}. The numbers of candidates remain relatively stable after the sharp drop point. Clearly the pruning strategies that were utilized pruned the nodes in the structure representing the search space with respect to the threshold. As we know, a deeper and wider search can be performed, which results in identifying more patterns with longer lengths. If the threshold increases, there are more candidates satisfying the inequality in the pruning strategy and they can be safely pruned. According to the definition of osHUSP, it is not difficult to understand that the number of patterns is also anti-monotonic according to the minimum on-shelf utility threshold. Moreover, the numbers of candidates and patterns are quite different among the six datasets. These depend on the utility distribution and intrinsic features of datasets. In conclusion, Figure \ref{candidate} demonstrates the pruning strategies adopted by the two algorithms to filter the candidates satisfactorily and to help in discovering osHUSPs efficiently.

\begin{figure*}[htbp]
	\centering
	\includegraphics[width=1\linewidth]{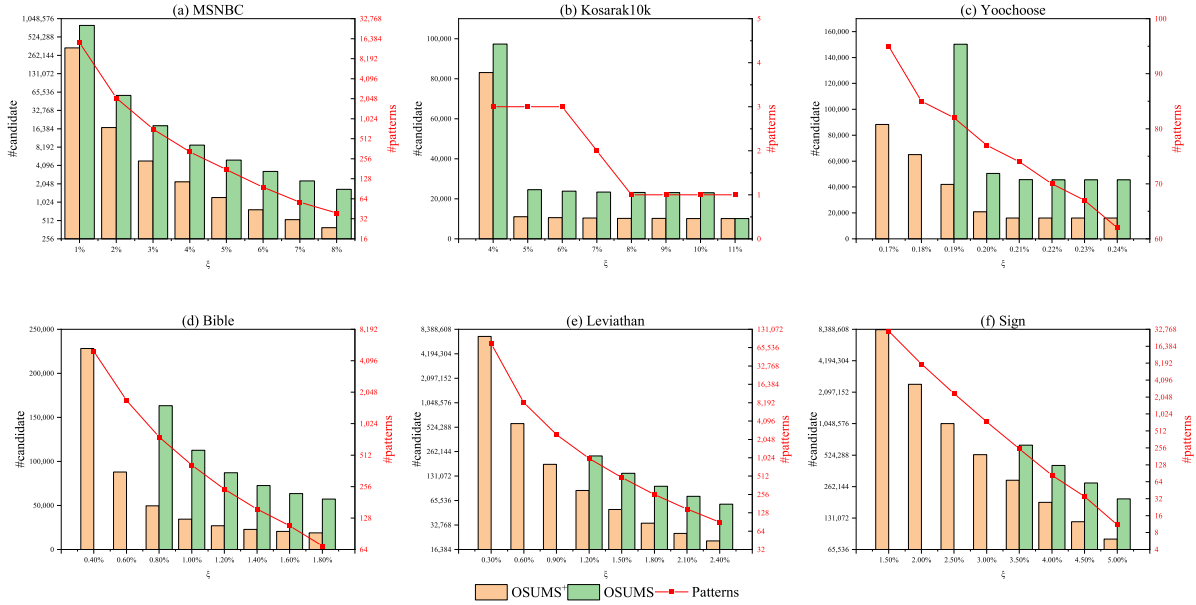}
	\caption{Experimental results of candidates and patterns generated}
	\label{candidate}
\end{figure*}

\subsection{Case Studies}

As mentioned previously, the only existing algorithm TP-HOUS is so inefficient that it cannot discover the desired osHUSPs in the limited runtime. To directly compare the two proposed methods with TP-HOUS, we performed a series of experiments on two special datasets, which were generated by extracting the first 300 $q$-sequences of \textit{MSNBC} and \textit{Yoochoose} under various minimum on-shelf utility threshold settings. To reduce the high computational complexity, each time period of the $q$-sequence ranged from 1 to 3. The results are shown in Figure \ref{case1}, which intuitively visualizes the gap between the three compared methods. Obviously, the performance of the two proposed algorithms, OSUMS and OSUMS$^{+}$, is significantly better than that of TP-HOUS in terms of runtime, memory usage, and candidate filtering. For both datasets, the execution time used by the three algorithms decreases with the increase in the minimum on-shelf utility threshold. This can be explained by the fact that the methods are required to check less candidates with the increase in threshold owing to the intrinsic poverty of pruning strategies. The results of the candidate filtering process also verify this aspect. The memory consumption of all three approaches has significant variations; however, TP-HOUS has significantly more memory requirement than our designed methods. Moreover, the phenomenon that the number of osHUSPs are anti-monotonic with respect to the threshold shows a good agreement with theoretical analysis. In conclusion, the two developed algorithms OSUMS and OSUMS$^{+}$ substantially outperform the state-of-the-art algorithm TP-HOUS.

\begin{figure*}[htbp]
	\centering
	\includegraphics[width=1\linewidth]{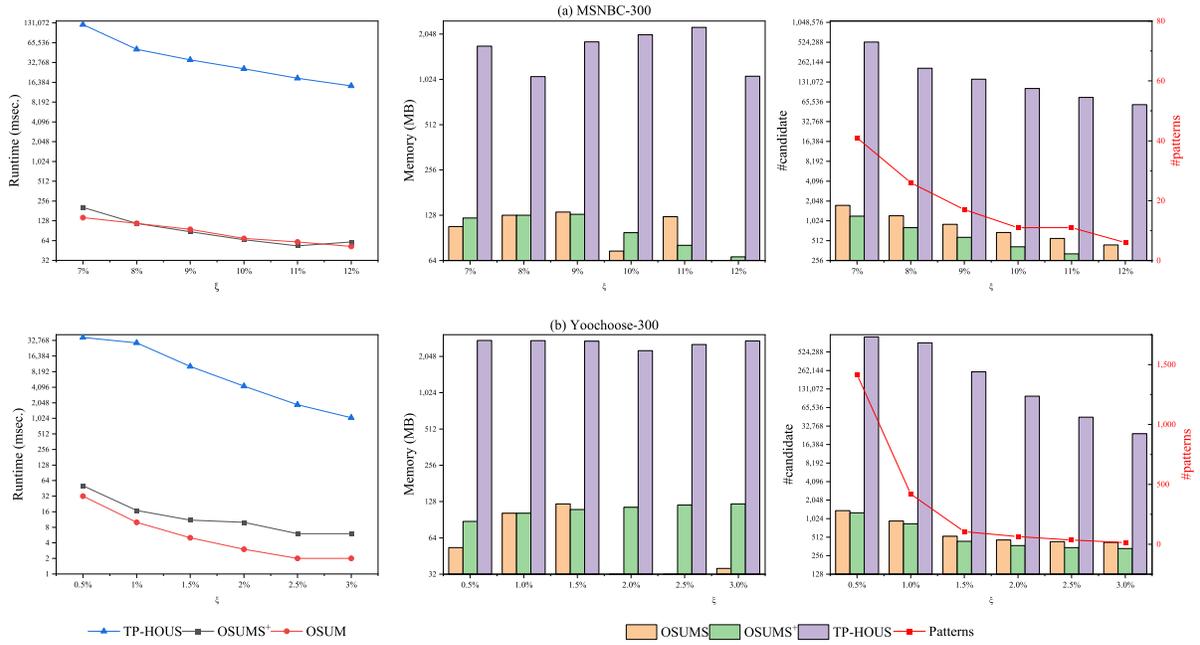}
	\caption{Results of the two cases}
	\label{case1}
\end{figure*}

\subsection{Effectiveness Analysis}

The crucial contributions of the designed methods are the pruning strategies. In theory, the pruning strategies adopted by the two proposed methods are able to vitally reduce the search space and improve the performance. To evaluate their effectiveness, we developed several method variants based on OSUMS or OSUMS$^{+}$, but without one of the pruning strategies. The performance of the compared methods in terms of runtime and memory usage for the six datasets is illustrated in Figure \ref{effectiveness}. The details are listed in Table \ref{table_effect}. For example, the OSUMS$_\textit{LDP}$ method uses OSUMS as the backbone but does not utilize the LDP pruning strategy, and so on. The minimum on-shelf utility thresholds were set to 8, 11, 0.24, 1.8, 2.4, and 5\% for the six datasets, respectively. As it can be observed from the runtime results, the two proposed methods extract the desired osHUSPs more quickly that their variants in all cases. In particular, the variants OSUMS$_\textit{LDP}$ and OSUMS$^{+}_{GDP}$ required the maximum execution time under most conditions. This indicates that the tight upper bound, called \textit{TPEU}, has a significant overestimated effect on the candidates. When compared to the two width pruning strategies, the two depth strategies demonstrated superior performance in terms of significantly reducing the search space and speeding up the mining process. In addition, it can be observed that our algorithms consume the least memory to calculate on-shelf utilities and store projected databases among the compared approaches. It is noted that the \textit{TRSU} upper bound may not be sufficiently tight in the \textit{MSNBC} dataset, as the improvement is negligible for the execution time. In addition, OSUMS$_\textit{LDP}$ ran out of memory in the \textit{SIGN} dataset. This demonstrates that the OSUMS method generated a large number of candidates without the LDP pruning strategy. Finally, a conclusion can be drawn that the designed pruning strategies contribute significantly in improving the efficiency of the algorithms.

\begin{table*}[!htbp]
	\caption{Effectiveness of pruning strategies adopting by OSUMS and OSUMS$^{+}_{}$ ("/" indicates running out of memory)}
	\label{table_effect}
	\centering
	\begin{tabular}{|c|c|p{1.5cm}<{\centering}|p{1.5cm}<{\centering}|p{1.5cm}<{\centering}|p{1.5cm}<{\centering}|p{1.5cm}<{\centering}|p{1.5cm}<{\centering}|}
		\hline \textbf{Dataset} & \textbf{Result} & $\mathbf{OSUMS}$ & $\mathbf{OSUMS_{LDP}}$ & $\mathbf{OSUMS_{LWP}}$ & $\mathbf{OSUMS^{+}}$ & $\mathbf{OSUMS^{+}_{GDP}}$ & $\mathbf{OSUMS^{+}_{GWP}}$\\
		\hline
		\multirow{2}{*}{\textit{MSNBC}} 
		& {Time} &{5.865}& {576.083} & {6.162} & {7.381}& {1,265.4} & {7.892} \\
		\cline{2-8}
		& {Memory}&{1,526} & {5,081} & {2,621} & {423}& {3,252} & {1,235} \\
		\hline
		
		\multirow{2}{*}{\textit{Kosarak10k}} 
		& {Time} &{2.529}& {21.774}& {4.949}& {2.529}& {36.977}& {7.684} \\
		\cline{2-8}
		& {Memory}&{1,027} & {2,725}& {5,862}& {1,028}& {3,109}& {1,796} \\
		\hline
		
		\multirow{2}{*}{\textit{Yoochoose}} 
		& {Time} & {5.549} & {5.163}& {5.915}& {0.528}& {1.279}& {1.995} \\
		\cline{2-8}
		& {Memory}& {4,688} & {4,708}& {4,684}& {346}& {455}& {639} \\
		\hline
		
		\multirow{2}{*}{\textit{Bible}} 
		& {Time} & {317.663} & {1,062.994}& {488.111}& {647.871}& {2,095.934}& {1,214.352} \\
		\cline{2-8}
		& {Memory}& {4,680} & {6,440}& {5,788}&{3,332}& {3,942}& {3,446} \\
		\hline
		
		\multirow{2}{*}{\textit{Leviathan}} 
		& {Time} & {87.773} & {376.834}& {134.817}& {153.624}& {732.121}& {310.936} \\
		\cline{2-8}
		& {Memory}& {4,820} & {5,970}& {6,086}& {3,081}& {3,208}& {3,241} \\
		\hline
		
		\multirow{2}{*}{\textit{Sign}} 
		& {Time} & {61.722} & {/}& {70.713}& {94.847}& {1,803.195}& {169.84} \\
		\cline{2-8}
		& {Memory}& {4,258} & {/}& {4,820}& {3,236}& {3,796}& {3,536} \\
		\hline
		
	\end{tabular}
\end{table*}

\begin{figure}[htbp]
	\centering
	\includegraphics[width=0.7\linewidth]{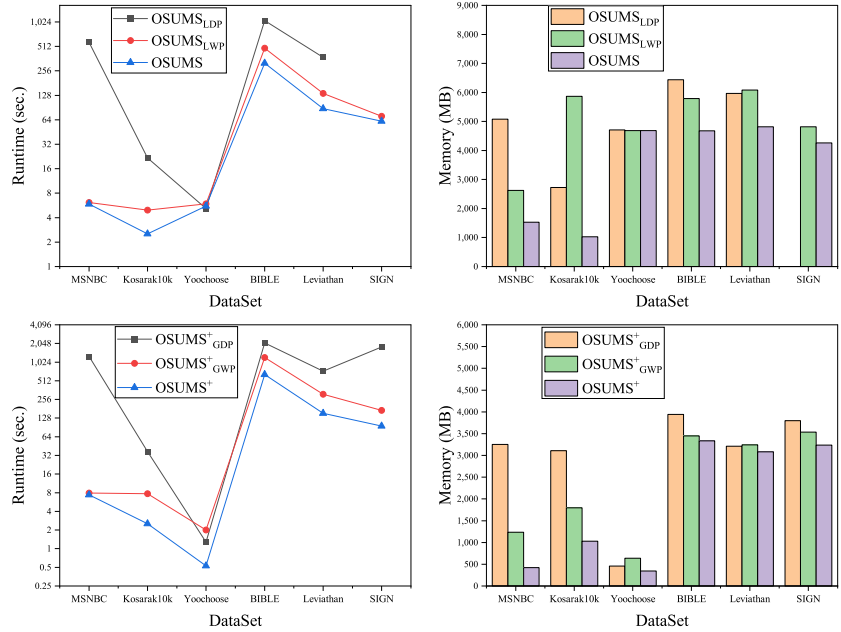}
	\caption{Effectiveness of the designed pruning strategies}
	\label{effectiveness}
\end{figure}

\subsection{Scalability}

This subsection presents the results of the scalability test for determining the robustness of the proposed two algorithms. Eight experimental datasets were obtained from \textit{Leviathan} scaled from $ 1 $ time to $ 35 $ times, that is, $ scal\_1, scal\_5, \dots, scal\_35 $, as shown on the x-coordinate of Figure \ref{scalability}. The time period of each $q$-sequence was randomly assigned an integer ranging from 1 to 5. With the same on-shelf utility minimum threshold, we conducted experiments with the OSUMS and OSUMS$^{+}$ on each dataset. The metrics of running time, memory footprint, and the number of candidates and patterns were used as measures. As it can be observed in Figure \ref{scalability}, both the runtime and memory consumption increase smoothly with increase in the dataset size. The reason is that the methods require more computational resources for utility calculation, and the projected databases grow significantly when the database is in a large scale. When compared to OSUMS, the one-phase algorithm OSUMS$^{+}$ discovered osHUSPs within a considerable amount of time, which was almost linearly related to the dataset size. Moreover, the memory usage of two algorithms also followed linear distribution. It is to be noted that the numbers of generated candidates and extracted patterns depend on the utility distribution of databases but not the size of databases. In conclusion, both the approaches have an advantage over scalability, yet OSUMS$^{+}$ is clearly the superior method.

\begin{figure*}[htbp]
	\centering
	\includegraphics[width=1\linewidth]{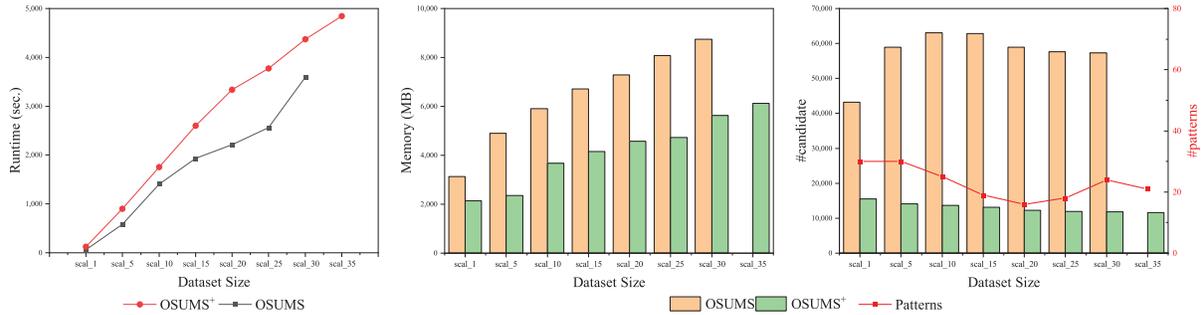}
	\caption{Scalability of the compared methods}
	\label{scalability}
\end{figure*}

\section{Conclusion} 
Utility mining has performed a crucial role in the domain of knowledge discovery in database. Nevertheless, utility mining intrinsically presents a bias toward items that have longer on-shelf time as they have a greater chance to generate a high utility. To avoid the bias, the problem of OSUM was presented. In this paper, we focused on the task of OSUM in sequence data, where the sequential database is divided into several partitions according to time periods and items are associated with utilities and several on-shelf time periods. To cope with this issue, we proposed two methods, OSUMS and OSUMS$^{+}$, to extract osHUSPs efficiently. To improve the mining efficiency, we proposed two novel and compact data structures and two tight upper bounds. The first algorithm, OSUMS, operated in two phases with two local pruning strategies and an efficient calculation strategy in the mining process. For further efficiency, we designed the one-phase algorithm OSUMS$^{+}$ by adopting two global pruning strategies. Substantial experimental results on certain real and synthetic datasets show that the two methods outperformed the state-of-the-art algorithm TP-HOUS. When compared to OSUMS$^{+}$, OSUMS consumed a significantly large amount of memory and was not sufficiently efficient. In the future, an interesting direction is to redesign the two algorithms and develop a parallel and distributed version, for example, utilizing the MapReduce paradigm to support OSUM in sequence data on large-scale databases in distributed environments.

\section{Acknowledgment}
This research is funded by National Natural Science Foundation of China (Grant No. 62002136), Natural Science Foundation of Guangdong Province, China (Grant No. 2020A1515010970), and Shenzhen Research Council (Grant No. GJHZ20180928155209705).

\bibliographystyle{ACM-Reference-Format}
\bibliography{OSUMS.bib}
\end{document}